\documentclass[journal,onecolumn]{IEEEtran}
\usepackage{graphicx}
\usepackage{subcaption}
\usepackage[dvipsnames]{xcolor}
\usepackage{latexsym,amssymb}
\usepackage{amsmath}
\usepackage{amsthm,amssymb}
\usepackage{rotating}
\usepackage{caption}
\usepackage{hyperref}
\usepackage{calrsfs}

\newtheorem{theorem}{Theorem}

\newtheorem{remark}{Remark}
\DeclareGraphicsExtensions{.pdf,.png,.eps}
\theoremstyle{thmstylethree}

\begin{document}

\title{Fully Distributed Adaptive Consensus Approach for Economic Dispatch Problem}

\author{
Arnab~Pal*,
Suman~Singha~Roy,
and~Asim~Kumar~Naskar%
\thanks{
\textbf{*Corresponding Author:} Arnab Pal is with the Department of Electrical Engineering,
Vidya Pratishthan's Kamalnayan Bajaj Institute of Engineering \& Technology,
Baramati, Maharashtra 413133, India
(e-mail: arnab.pal@vpkbiet.org).
}%
\thanks{
Suman Singha Roy is with the Department of Electrical Engineering,
IIT Delhi, Delhi, India
(e-mail: Sumansingharoy1998@gmail.com).
}%
\thanks{
Asim Kumar Naskar is with the Department of Electrical Engineering,
National Institute of Technology Rourkela, Odisha, India
(e-mail: naskarasim@gmail.com).
}
}

\maketitle

\begin{abstract}
This research presents a novel approach to solving the economic load dispatch (ELD) problem in smart grid systems by leveraging a multi-agent distributed consensus strategy. The core idea revolves around achieving agreement among generators on their incremental cost values, thereby enabling an optimal allocation of power generation. To enhance convergence and robustness, the study introduces an adaptive coupling weight mechanism within a fully decentralized consensus framework, carefully designed with appropriate initial settings for incremental costs. The proposed distributed control protocol is versatile—it functions effectively in both constrained and unconstrained generator capacity scenarios. Importantly, the methodology ensures that total power generation continuously matches dynamic load demands throughout the dispatch process, maintaining system-wide balance. To accommodate fluctuating and time-varying load profiles, a “dummy node” is incorporated into the network architecture, acting as a flexible proxy for real-time demand changes. The resilience of the method is further evaluated under communication disruptions, specifically by analyzing generator link failures through a switching network topology. Stability of the system is rigorously established using a Lyapunov-based analysis, assuming an undirected and connected communication graph among agents. To validate the practical efficacy of the proposed technique, comprehensive simulations are conducted on the IEEE 30-bus test system within the MATLAB environment, confirming its accuracy, adaptability, and computational efficiency in realistic smart grid conditions.
\end{abstract}

\begin{IEEEkeywords}
Multi-agent Systems, Economic Load Dispatch, Adaptive Consensus, Time-varying Demand.
\end{IEEEkeywords}

\IEEEpeerreviewmaketitle

\section{Introduction}
The economic load dispatch (ELD) problem aims to determine the most cost-effective generation schedule that meets the total power demand in a smart grid while minimizing overall operational expenses. This challenge has garnered considerable attention in recent research, as it represents a cornerstone issue in modern power system operation. In traditional power systems, centralized optimization techniques—such as genetic algorithms \cite{4} and particle swarm optimization \cite{6,7} have been extensively employed to tackle ELD. These approaches typically rely on a central supervisory controller responsible for coordination, data aggregation, computation, and issuing dispatch commands to individual generators. However, in the context of smart grids—characterized by a large number of distributed energy resources—such centralized architectures become impractical due to scalability and communication bottlenecks \cite{8}. From a network perspective, a smart grid can be modeled as a complex multi-agent system, where nodes represent generating units and edges denote communication or information exchange pathways between them \cite{9}. Under this paradigm, distributed consensus strategies offer a promising alternative: by leveraging only local neighbor-to-neighbor information dictated by the network topology, agents can collaboratively converge toward a globally optimal solution. Distributed consensus control enables autonomous decision-making among agents to achieve a shared objective using limited, localized data—a concept that has attracted significant scholarly interest over the past two decades \cite{10,11}. While prior works such as \cite{30,31} have applied consensus-based methods to ELD, they often depend on global knowledge of the network structure (e.g., Laplacian eigenvalues), undermining full decentralization.  
Recent advances include both discrete-time and continuous-time formulations for distributed optimization \cite{13,14}. For instance, \cite{35} addresses ELD over directed graphs using consensus, while \cite{36,37} propose partially decentralized schemes. A dynamic programming–inspired distributed algorithm is introduced in \cite{16} to optimally allocate load among generators while respecting capacity constraints. Other studies \cite{17,18} present consensus-based dispatch mechanisms, yet they still require partial global network information, falling short of true distribution. In \cite{20}, an external penalty method is used within a consensus framework, with initial conditions set via a distributed allocation rule; however, this approach risks transient violations of generator limits during convergence. Meanwhile, \cite{19} develops a unified framework using a logarithmic barrier function to handle both bounded and unbounded generator capacities, proving stability and demonstrating incremental cost consensus under supply-demand balance—but it still relies on the second smallest eigenvalue of the Laplacian (algebraic connectivity) for controller tuning, which is non-local information. A common limitation across much of this literature is the assumption of fixed coupling weights in the communication graph, which may not reflect real-world dynamics and adaptability requirements. To address these gaps, this paper proposes a fully distributed consensus algorithm for ELD that employs adaptive, time-varying coupling weights eliminating the need for any global network parameters. The method ensures that the total generation cost is minimized precisely when all generators reach consensus on their incremental costs. It is applicable under both constrained and unconstrained generator capacity scenarios, always maintaining supply-demand equilibrium. Furthermore, we introduce the concept of dummy nodes to dynamically accommodate time-varying load demands. To enhance robustness, we also analyze system performance under switching network topologies, simulating communication link failures among generators. Finally, a novel Lyapunov candidate function is constructed to rigorously prove the convergence of the proposed protocol.

The remainder of this paper is organized as follows: Section II reviews essential concepts from graph theory, outlines the single-integrator consensus model, and formulates the ELD problem without generation limits, followed by the design of a distributed consensus protocol. Section III extends the analysis to include generator capacity constraints and addresses both constant and time-varying demand scenarios. Section IV presents numerical simulations based on the IEEE 30-bus test system, implemented in MATLAB, to validate the effectiveness and scalability of the proposed approach.
\section{Preliminaries}
\subsection{Graph Theory}
The interaction topology i.e. the pattern of information exchange among the agents can be represented by a  graph $G_n=(V,E)$, where $V = \left\{ {{v_1},........,{v_N}} \right\}$ is the set of nodes and the set of edges is $E \subseteq \left\{ {\left( {{v_i},{v_j}} \right):{v_i},{v_j} \in V} \right\}$. For $N$ agents the associated adjacency matrix of graph $G_n$ is defined as $\hat A_n = \left[ {{a_{ij}}} \right] \in {R^{N \times N}}$, where $a_{ij}=1$ if $i^{th}$ node collects information from $j^{th}$ , otherwise it will be zero. A graph is called undirected if $a_{ij}=a_{ji}$, otherwise directed. The Laplacian matrix associated with adjacency matrix $\hat A$ can be represented by $L = {\left[ {{l_{ij}}} \right]_{N \times N}}$ and ${l_{ii}} = \sum\limits_{j = 1}^N {{a_{ij}}}$, ${l_{ij}} =  - {a_{ij}},\,\forall i \ne j$. For a given vertex $i$, the neighboring set of nodes can be represented by $N_i$. An undirected graph is connected, if for every pair of nodes, there exists a path, otherwise disconnected. For an undirected connected graph, the following properties are useful \cite{22,23}:
The interaction topology i.e. the pattern of information exchange among the agents can be represented by a  graph $G_n=(V,E)$, where $V = \left\{ {{v_1},........,{v_N}} \right\}$ is the set of nodes and the set of edges is $E \subseteq \left\{ {\left( {{v_i},{v_j}} \right):{v_i},{v_j} \in V} \right\}$. For $N$ agents the associated adjacency matrix of graph $G_n$ is defined as $\hat A_n = \left[ {{a_{ij}}} \right] \in {R^{N \times N}}$, where $a_{ij}=1$ if $i^{th}$ node collects information from $j^{th}$ , otherwise it will be zero. A graph is called undirected if $a_{ij}=a_{ji}$, otherwise directed. The Laplacian matrix associated with adjacency matrix $\hat A$ can be represented by $L = {\left[ {{l_{ij}}} \right]_{N \times N}}$ and ${l_{ii}} = \sum\limits_{j = 1}^N {{a_{ij}}}$, ${l_{ij}} =  - {a_{ij}},\,\forall i \ne j$. For a given vertex $i$, the neighboring set of nodes can be represented by $N_i$. An undirected graph is connected, if for every pair of nodes, there exists a path, otherwise disconnected. For an undirected connected graph, the following properties are useful \cite{22,23}:
\begin{enumerate}
	\item $L = L^T$
	\item $\lambda_1(L) = 0, \lambda_i(L) > 0, \forall i = 2, 3, \dots, N$
	\item $\min_{x \perp \mathbf{1}, x \neq 0} \frac{x^T L x}{x^T x} = \lambda_2(L) \label{eq:aa}$
\end{enumerate}
The Kronecker product of matrices $J\in{R^{2 \times 2}}$ and $V\in{R^{p \times q}}$ defined by, $J \otimes V = {\left[ {\begin{array}{*{20}{c}}
{{j_{11}}V}&{{j_{12}}V}\\
{{j_{21}}V}&{{j_{22}}V}
\end{array}} \right]_{2p \times 2q}}$.
	%\begin{lem}\cite{22}
		%If $a \geqslant 0,\,b \geqslant 0,\,p > 0,\,q > 0$ are real numbers satisfying $\frac{1}{p} + \frac{1}{q} = 1$, then $ab \leqslant \frac{{{a^p}}}{p} + \frac{{{b^q}}}{q}$.
	%\end{lem}
\subsection{Consensus of Multi-agent system}
 \par The consensus among a set of single integrator is described as the following \cite{24,25},
 \begin{equation}
\begin{array}{l}
{{\dot z}_i} ={u}_{i}\\
{u}_{i}= k\sum\limits_{j \in {N_i}} {{a_{ij}}} \left( {{z_j} - {z_i}} \right) =  - k\sum\limits_{j = 1}^N {{l_{ij}}{z_j}}, i=1,2,...,N
\end{array}
 \end{equation}
 \par Here, $z_i$ is the state vector for $i^{th}$ agent, and it can be any physical parameter like the voltage, power, incremental cost, etc, and $k$ is the coupling weight.
 The aggregated form of the above system is given by
\begin{equation*}
\dot z =  - \left( {L \otimes k} \right)z
\end{equation*}
where $z={{\left[ {{z}_{1}}^{T} {{z}_{2}}^{T} \cdots {{z}_{N}}^{T} \right]}^{T}}$
 \par The consensus is achieved by the agents if the following is satisfied,
 \begin{equation*}
\mathop {\lim }\limits_{t \to \infty } \left( {{z_i} - {z_j}} \right) = 0, \mathop {\lim }\limits_{t \to \infty } \left( {{z_i} - {z^{*}}} \right) = 0, \,\forall i,j = 1,2,....,N
 \end{equation*}
Here the consensus state of the participating agents is given by
 \begin{equation*}
 {{z}^{*}}=\sum\limits_{i=1}^{N}{{{p}_{i}}{{z}_{i}}(0)}
 \end{equation*}
 which is a matrix equation in general $\left( z\in {{R}^{n}} \right)$, where ${p}_{i}$ are the elements of the left-eigenvector of the Laplacian matrix corresponding to ${\lambda}_{1}=0$ and ${z}_{i}(0)$ is the initial state values.

%where ${\sigma}_{ij}(t)$ is a positive constant, equals one when information flows from j to i, zero otherwise. By \cite{12}, if there exists an infinite series of uniformly bounded, non-overlapping time intervals, i.e., $\left[ {{t}_{{{p}_{0}}}},{{t}_{{{p}_{f}}}} \right]\cap \left[ {{t}_{{{q}_{0}}}},{{t}_{{{q}_{f}}}} \right]=\phi \,\,\forall p,q= 1,2,....,N $, where ${{t}_{{{p}_{0}}}},{{t}_{{{q}_{0}}}}$ denote the initial and ${{t}_{{{p}_{f}}}},{{t}_{{{q}_{f}}}}$ denote the final time at the ${p}^{th}$ and ${q}^{th}$ graph topology and $\phi$ is a null set; with $\left| {{u}_{i}} \right|\le M \,\,\forall i= 1,2,....,N$, where M is any finite constant; and the union of the graphs across each interval is maintained as a spanning tree, with the topology being fixed in a specific time-interval $\left[ {{t}_{{{k}_{0}}}},{{t}_{{{k}_{f}}}} \right] \,\,\forall k= 1,2,....,N $ then consensus is achieved asymptotically.

\subsection{Representation of Power System Network as Graph Topology}
As an example, the IEEE-30 bus system is considered, as shown in Fig.1(a). In the system, there are generator buses, load buses and they are connected through power line. Generator buses  are important for power dispatch problem. In this work, these buses are considered as a node.
\begin{figure}[!ht]
    \centering

    \begin{subfigure}[b]{0.4\textwidth}
        \centering
        \includegraphics[width=\textwidth]{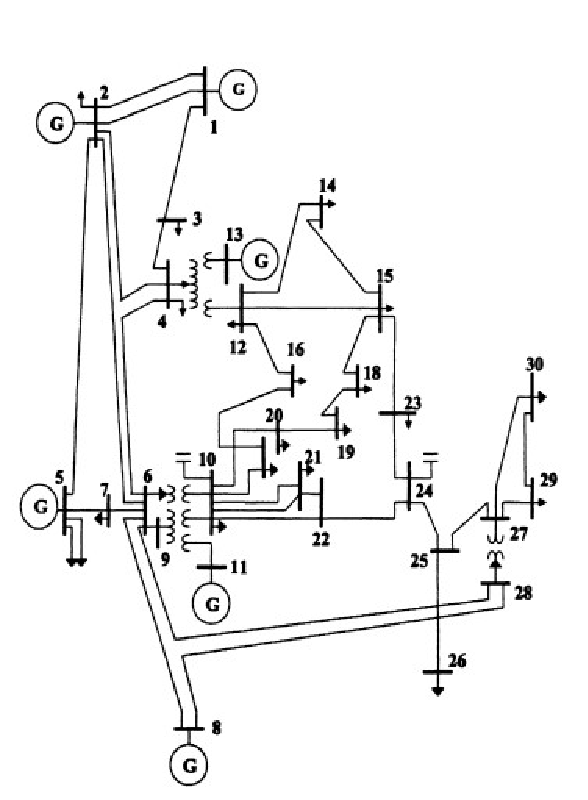}
        \caption{IEEE-30 BUS System}
    \end{subfigure}
    \hspace{-0.08cm} % reduce gap here
    \begin{subfigure}[b]{0.2\textwidth}
        \centering
        \includegraphics[width=\textwidth]{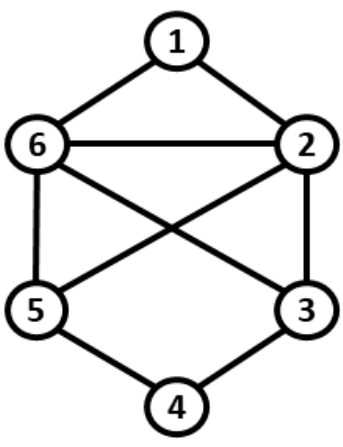}
        \caption{IEEE-30 BUS Network Topology}
    \end{subfigure}

    \caption{Combined view of the IEEE-30 Bus System and its Network Topology}
\end{figure}
The nodes 1, 2, 3, 4, 5, 6 corresponds to generator number 1, 2, 5, 8, 11, 13 of the IEEE-30 BUS system. For this work it is assumed that there exist communication line between generator buses or nodes. A typical example of communication topology is shown in Fig.1(b).
In this section, the economic power dispatch problem is reformulated as the consensus problem in multi-agent systems to achieve the lowest total cost of  generation  in a smart grid, without  constraints on power generation capacity.
\subsection{Problem formulation} \label{ELD}
Consider a distributed optimal power dispatch for smart grids with N generating units, under the assumption of no capacity limitation. The cost function of the $i^{th}$ generating unit is considered \cite{19}, as follows,
\begin{equation}\label{eq:1}
{C_i}\left( {{P_i}} \right) = {a_i} + {b_i}{P_i} + {c_i}{P_i}^2,\,\forall i =1,2,..,N
\end{equation}
\par where $C_i$ is the cost of $i^{th}$ generator, $P_i$ is the active power output of $i^{th}$ generator, and $a_i,b_i,c_i$ are the positive constants. Then the total cost of $N$ generators is $J = \sum\limits_{i = 1}^N {{C_i}\left( {{P_i}} \right)}$. Now, the economic power dispatch problem for $N$ generators can be formulated as,
\begin{equation} \label{eq:2}
	\min_{P_i} \sum_{i=1}^{N} C_i(P_i) \quad \text{s.t.} \quad \sum_{i=1}^{N} P_i = P_D
\end{equation}
\par Here $P_D$ is the total power demand of the smart grid.
\subsection{Solution to power dispatch problem}
Using Lagrange multiplier theory, and KKT condition, the solution to the optimization problem (\ref{eq:2}) is $\frac{{d{J^*}}}{{dP}} = h\textbf{1}$ and $\sum\limits_{i = 1}^N {{P_i} - {P_D} = 0}$, where $h$ is the Lagrange multiplier and $\textbf{1} = {\left[ {\begin{array}{*{20}{c}}
1&.&.&1
\end{array}} \right]^T}$. Considers the incremental cost of the $i^{th}$ generating unit is defined by, ${w_i} = \frac{{d{C_i}\left( {{P_i}} \right)}}{{d{P_i}}}, i=1,2,...,N$. The optimal power dispatch problem is solved if the following two conditions are satisfied : Condition (1): $w_i=w_j, i,j=1,2,...,N$ Condition (2): $\sum\limits_{i = 1}^N {{P_i}}={P_D}$.  
\par This means that consensus on the incremental cost of all generators can solve the optimal power dispatch problem defined by (\ref{eq:2}).
\subsection{Adaptive consensus protocol  for economic dispatch}
Since the final condition of power dispatch problem is maintaining ${w_i} = {w_j}$ at $t \to \infty$. The same endresult is also obtained in consensus of single integrator system \cite{19}.
\par Now inspired by the work on \cite{41}, an adaptive incremental cost consensus algorithm is proposed under undirected connected network graph among the generating nodes. Considering, the incremental cost of $i^{th}$ generator as $w_i = {b_i} + 2{c_i}{P_i}, i=1,2,..., N$.
\begin{theorem}
Under fixed undirected network topology among the generating nodes, the adaptive consensus method described by 
\begin{equation}\label{eq:3}
\begin{array}{l}
{{\dot w}_i} = \sum\limits_{j \in {N_i}} {{a_{ij}}} \left( t \right)\left( {{w_j} - {w_i}} \right)\\
{{\dot a}_{ij}}\left( t \right) = {\beta _{ij}}{\left( {{w_i} - {w_j}} \right)^2},\;j \in {N_i}
\end{array}
\end{equation}
where $\beta_{ij}=\beta_{ji}$ are positive scalars, $a_{ij}(t)$ is the entries of the adjacency matrix and can maintain
stability when applied to solve the economic load dispatch
problem expressed by Conditions 1 and 2.
\end{theorem}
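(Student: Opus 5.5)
The plan is to prove two things: that the dynamics (\ref{eq:3}) preserve the supply--demand balance (Condition 2), and that the incremental costs reach consensus (Condition 1). Throughout I assume the initialization is chosen so that $\sum_i P_i(0)=P_D$, which the paper calls the appropriate initial setting for the incremental costs.

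\textbf{Condition 2.} Since $w_i=b_i+2c_iP_i$, we have $\dot P_i=\dot w_i/(2c_i)$. This is where I expect the main obstacle. Because $a_{ij}(t)=a_{ji}(t)$ is preserved (from $\beta_{ij}=\beta_{ji}$ and symmetric initial weights), the sum $\sum_i \dot w_i$ vanishes by antisymmetry of the terms $a_{ij}(w_j-w_i)$. However, $\sum_i \dot P_i=\sum_i \dot w_i/(2c_i)$ need not vanish unless all $c_i$ are equal.

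The first thing I would try is to read the protocol as acting on the generator outputs through the weights $1/(2c_i)$, so that the incremental cost at node $i$ evolves as $\dot w_i = 2c_i\sum_j a_{ij}(w_j-w_i)$. With that reading, $\sum_i \dot P_i=\sum_i\sum_j a_{ij}(w_j-w_i)=0$, so $\sum_i P_i(t)=P_D$ for all $t$ and the balance is maintained throughout the transient. If the statement must be taken literally, the fallback is weaker: $\sum_i w_i$ is invariant, consensus forces $w_i\to w^*=\frac1N\sum_i w_i(0)$, and Condition 2 then holds only if $w_i(0)$ are chosen so that this average equals the optimal multiplier $\lambda^*=\big(P_D+\sum_i b_i/(2c_i)\big)/\sum_i 1/(2c_i)$. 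That is a global quantity, so the fallback gives up some of the claimed full decentralization. In either reading, the limit common value is identified by the invariant together with consensus.

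\textbf{Condition 1.} I would use a Lyapunov argument. Let $\bar w$ denote the invariant weighted average, $\bar w=\sum_i w_i/(2c_i)\big/\sum_i 1/(2c_i)$ in the weighted reading or the plain mean in the literal one, and set $e_i=w_i-\bar w$. Take
\begin{equation*}
V=\frac12\sum_i \frac{e_i^2}{2c_i}+\sum_i\sum_{j\in N_i}\frac{(a_{ij}-\alpha)^2}{4\beta_{ij}},
\end{equation*}
where $\alpha>0$ is a constant still to be fixed (in the literal reading, drop the $1/(2c_i)$ weights). Using the symmetry of the graph and of $a_{ij}$, differentiation gives
\begin{equation*}
\dot V=-\tfrac12\sum_{i,j}a_{ij}(w_i-w_j)^2+\tfrac12\sum_{i,j}(a_{ij}-\alpha)(w_i-w_j)^2=-\alpha\, e^T L_0 e,
\end{equation*}
where $L_0$ is the unweighted Laplacian of the fixed graph. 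The adaptive terms cancel exactly, so $\dot V\le 0$. Since the graph is connected and $e$ lies in the appropriate complement of $\mathbf 1$, property 3 of the Laplacian gives $e^TL_0e\ge\lambda_2\|e\|^2$. Crucially, $\lambda_2$ and $\alpha$ appear only in the analysis and not in the protocol, so the protocol uses no global information.

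To conclude, $V$ is nonincreasing, so $e$ and every $a_{ij}$ stay bounded. Each $a_{ij}$ is also monotone nondecreasing, so it converges to a finite limit. Integrating $\dot V$ shows $\int_0^\infty \|e\|^2\,dt<\infty$, and $\dot e$ is bounded. Barbalat's lemma then gives $e\to 0$, i.e.\ $w_i\to w_j$ for all $i,j$, which is Condition 1. Combined with the invariant above, the common value is the optimal $\lambda^*$ and Condition 2 holds, which establishes the stated stability.
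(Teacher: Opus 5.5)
Your argument for Condition 1 is essentially the paper's. You use the same Lyapunov function $\frac12e^Te+\sum_i\sum_{j\in N_i}(\theta_{ij}-a_{ij})^2/(4\beta_{ij})$ with $\theta_{ij}\equiv\alpha$, get the same exact cancellation of the adaptive terms leaving $-\alpha e^TL_0e$, and use the same $\lambda_2$ bound. Your closing step is more careful than the paper's. The paper jumps from $\dot V\le-\lambda_2(\tilde L)\theta_0e^Te$ to $e\to0$, and misstates that $\tilde L$ is positive definite. Because $\dot V$ is only negative semidefinite in the full state $(e,a)$, your boundedness, monotonicity and Barbalat argument is what actually justifies the limit.

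The real difference is Condition 2. The paper asserts $\sum_iP_i(t)=P_D$ for all $t$ under the protocol as stated, using $\bar w=\mu\sum_i\bigl(b_i/(2c_i)+P_i\bigr)$ with $\mu=\sum_i2c_i/N$. In fact $\bar w=\frac1N\sum_i2c_i\bigl(b_i/(2c_i)+P_i\bigr)$. So for unequal $c_i$, invariance of $\sum_iw_i$ does not give $\sum_i\dot P_i=0$, exactly as you note. Your weighted reading, where the protocol drives $P_i$ directly (the form the paper itself adopts in Theorem 2), is what yields balance along the whole trajectory. Your literal fallback correctly states the cost otherwise: balance holds only in the limit, and only with a globally tuned initialization.

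One small repair is needed in the weighted case. There $e$ is orthogonal to $(1/(2c_1),\dots,1/(2c_N))^T$, not to $\mathbf 1$, so property 3 does not give $\lambda_2$ directly. Since this vector is not orthogonal to $\mathbf 1$, $L_0$ is still positive definite on that hyperplane, just with a different constant. Alternatively, apply Barbalat to $e^TL_0e$ itself, which already forces $w_i-w_j\to0$ on a connected graph.
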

\textit{Proof}: 
To show the balance condition between supply and demand i.e. condition 2, let us assume the average consensus point is $\bar w\left( t \right) = \frac{1}{N}\sum\limits_{i = 1}^N {{w_i}\left( t \right)}$. Since ${a_{ij}}\left( t \right) = {a_{ji}}\left( t \right),\;j \in {N_i}$, then the following can be written,
\begin{equation}\label{eq:ff}
\dot {\bar {w}} = \frac{1}{N}\sum\limits_{i = 1}^N {\sum\limits_{j \in {N_i}} {{a_{ij}}} \left( t \right)\left( {{w_j} - {w_i}} \right) = 0}
\end{equation}
\par Using ${w_i}\left( t \right) = {b_i} + 2{c_i}{P_i}\left( t \right)$, we can write $\bar w\left( t \right) = \mu \sum\limits_{i = 1}^N {\left( {\frac{{{b_i}}}{{2{c_i}}} + {P_i}\left( t \right)} \right)}$, where $\mu  = \frac{{\sum\limits_{i = 1}^N {2{c_i}} }}{N}$. Now ${\dot {\bar w}}\left( t \right) = \mu \sum\limits_{i = 1}^N {{{\dot P}_i}} \left( t \right) = 0$, which indicates that, $\sum\limits_{i = 1}^N {{P_i}\left( t \right) = } \sum\limits_{i = 1}^N {{P_i}\left( 0 \right) = } {P_D}$.
\par According to the above-mentioned analysis, the demand and supply of power in the considered smart grid are always balanced $\forall t \ge 0$. 
\par Now for condition 1, let ${e_i} = {w_i} - \bar w,\,\forall i = 1,...,N$, therefore the incremental cost consensus for all generating units will be achieved if $\mathop {\lim }\limits_{t \to \infty } \left\| {{e_i}\left( t \right)} \right\| = 0,\,\forall i = 1,...,N$. By using (\ref{eq:ff}) one can obtain the following,
\begin{equation}
{{\dot e}_i} = \sum\limits_{j \in {N_i}} {{a_{ij}}} \left( t \right)\left( {{e_j} - {e_i}} \right)
\end{equation}
with ${{\dot a}_{ij}}\left( t \right) = {\beta _{ij}}{\left( {{e_i} - {e_j}} \right)^2},\;j \in {N_i}$. Now the aggregated form of the error dynamics can be written as,
\begin{equation}\label{eq:gg}
\dot e =  - L\left( t \right)e\left( t \right)
\end{equation}
where $e = {\left[ {\begin{array}{*{20}{c}}
{{e_1}\left( t \right)}&.&.&{{e_N}\left( t \right)}
\end{array}} \right]^T}$, next to show the stability analysis the following Lyapunov energy function is considered for system (\ref{eq:gg}),
\begin{equation}
V = \frac{1}{2}e{\left( t \right)^T}e\left( t \right) + \sum\limits_{i = 1}^N {\sum\limits_{j \in {N_i}} {\frac{1}{{4{\beta _{ij}}}}} } {\left( {{\theta _{ij}} - {a_{ij}}\left( t \right)} \right)^2}
\end{equation}
where ${\theta _{ij}} = {\theta _{ji}} \ge {\theta _0}$ if ${a_{ij}}\left( 0 \right) > 0$ otherwise ${\theta _{ij}} = {\theta _{ji}} = 0$, $\theta_0$ is a given positive constant. Using (\ref{eq:3}), the following can be obtained,
\begin{equation}\label{eq:hh}
\dot V =  - e{\left( t \right)^T}L\left( t \right)e\left( t \right) - \frac{1}{2}\sum\limits_{i = 1}^N {\sum\limits_{j \in {N_i}} {\left( {{\theta _{ij}} - {a_{ij}}\left( t \right)} \right){{\left( {{e_j} - {e_i}} \right)}^2}} }
\end{equation}
Moreover $e{\left( t \right)^T}L\left( t \right)e\left( t \right) = \frac{1}{2}\sum\limits_{i = 1}^N {\sum\limits_{j \in {N_i}} {{a_{ij}}\left( t \right){{\left( {{e_j} - {e_i}} \right)}^2}} }$. So, the following can be derived using the above analysis and (\ref{eq:hh}),
\begin{equation}
\dot V =  - \frac{1}{2}\sum\limits_{i = 1}^N {\sum\limits_{j \in {N_i}} {{\theta _{ij}}{{\left( {{e_j} - {e_i}} \right)}^2}} }
\end{equation}
Since ${\theta _{ij}} = {\theta _{ji}} \ge {\theta _0}$ , the following can be written,
\begin{equation}\label{eq:ii}
\dot V \le  - \frac{{{\theta _0}}}{2}\sum\limits_{i = 1}^N {\sum\limits_{j \in {N_i}} {{{\tilde a}_{ij}}{{\left( {{e_j} - {e_i}} \right)}^2}} }
\end{equation}
where $\tilde a_{ij}=1$ for $j \in {N_i}$ and $\tilde a_{ij}=0$ otherwise, for all $i=1,2,....,N$. Then, (\ref{eq:ii}) can be rewritten as 
\begin{equation}\label{eq:jj}
\dot V \le - {\theta _0}e{\left( t \right)^T}\tilde Le\left( t \right)
\end{equation}
where $\tilde L = {\left[ {{\tilde l_{ij}}} \right]_{N \times N}}$ with ${\tilde l_{ij}} =  - {\tilde a_{ij}}$ and ${\tilde l_{ii}} = \sum\limits_{j = 1,j \ne i}^N {{\tilde a_{ij}}}$ for all $i,j=1,2,...,N$.  $\tilde L$ is the Laplacian matrix for the undirected network topology. Now using the property of Laplacian matrix for undirected graph topology, the following can be obtained,
\begin{equation}\label{eq:kk}
\dot V \le  - {\lambda _2}\left( \tilde L \right){\theta _0}e{\left( t \right)^T}e\left( t \right)
\end{equation}
From (\ref{eq:kk}), it can be concluded that $\dot V$ is bounded, and $\dot V=0$ only if $ e=0$ because $\tilde L$ is always positive definite. Now based on the above mentioned analysis  one can show $\mathop {\lim }\limits_{t \to \infty }  e \to 0$ i.e. $\mathop {\lim }\limits_{t \to \infty } { e_i} \to 0$. From these results, it can be said that a consensus of incremental cost is reached using the proposed fully distributed protocol.
\hfill $\blacksquare$\\
\par The process of distributed adaptive consensus for solving the EDP is described in the following pseudo-code at the ${i}^{th}$ generating node.
\vspace{5pt}
\hrule
\begin{enumerate}
    \item Collect information ${{w}_{j}}\,\forall j\in {{N}_{i}}$
    \item Initialize Active Power Outputs $\left( {{P}_{i}}\left( 0 \right) \right)$
    \item Calculate the Incremental Costs $\left({\tilde w_i}\left( 0 \right) = I{{C}_{i}}\left( 0 \right)={{b}_{i}}+2{{c}_{i}}{{P}_{i}}\left( 0 \right) \right)$
    \item Calculate ${\sigma _i} = {\beta _{ij}}{\left( {{w_i} - {w_j}} \right)^2}$ then  by integrating ${\sigma _i}$ evaluate $a_{ij}(t)$ \label{algo1}
    \item Apply control input ${u_i} = \sum\limits_{j \in {N_i}} {{a_{ij}}} \left( t \right)\left( {{w_j} - {w_i}} \right)$
    \item \textbf{If} ${{u}_{i}}\approx0$ \textbf{then}
    \subitem Stop
    \item \textbf{Else}
    \subitem Goto \ref{algo1}
\end{enumerate}
\hrule
%\vspace{5pt}
%\captionof{figure}{Fully Distributed Adaptive Consensus for solving EDP}
\begin{remark}
In contrast to the centralized algorithms in the existing literature, a fully distributed economic load dispatch strategy is designed based on consensus in multi-agent systems. It is important to note that the graph is considered to be connected and the proposed approach is not affected by any global information like the second smallest eigenvalue of the Laplacian matrix or a number of agents, which makes this approach fully distributed. The optimal  economic power dispatch in smart grids can then be solved using distributed control and the adaptive weights $a_{ij}$ for $i,j=1,2,....,N$ will reach some positive constants. Furthermore, from the definition of incremental cost $w_i$, the output power of $N$ generators will converge to some positive constants if the state consensus among $w_i$, $i=1,2,...,N$ is reached.
\end{remark}
\section{Optimal Power Dispatch  With Capacity Limitations}
A fully distributed economic power dispatch algorithm is designed in this section to minimize the cost function of generators in a smart grid with capacity constraints by using adaptive consensus in multi-agent systems.
\subsection{Optimal Power Dispatch with Constant Demand}
\subsubsection{Problem Formulation}
Consider the same optimization problem defined by (\ref{eq:2}), with generator capacity limitations
\begin{equation}\label{eq:8}
\begin{split}
\mathop {\min }\limits_{{P_i}} \sum\limits_{i =1 }^N {{C_i}\left( {{P_i}} \right) = } \mathop {\min }\limits_{{P_i}} \sum\limits_{i = 1}^N {{a_i} + {b_i}{P_i}}  + {c_i}{P_i}^2\\s.t\, (i) \sum\limits_{i = 1}^N {{P_i}}  = {P_D},\, (ii) {P_i}^{\min } \le {P_i} \le {P_i}^{\max }
\end{split}
\end{equation}
\par where each generating unit is constrained by the maximum limit i.e. $P_i^{\max}$ and minimum limit i.e. $P_i^{\min}$ for $i=1,2,....,N$.
\par The mixed constrained optimization problem (\ref{eq:8})can be converted to an equally constrained convex optimization problem using the logarithm barrier function \cite{40}. Using logarithm barrier the cost function can be modified to the following
\begin{align*}
	C_i^\delta (P_i) &= C_i(P_i) - \delta \left( \ln(P_i - P_i^{\min}) + \ln(P_i^{\max} - P_i) \right) \\
	C_i^\delta (P_i) &= a_i + b_i P_i + c_i P_i^2 - \delta \left( \ln(P_i - P_i^{\min}) + \ln(P_i^{\max} - P_i) \right)
\end{align*}
\par here $\delta$ is a small value. It can be checked that, ${C_i}^\delta \left( {{P_i}} \right)$ is a convex function for ${P_i} \in \left[ {{P_i}^{\min },{P_i}^{^{\max }}} \right]$. Then, the optimization problem defined by (\ref{eq:8}) can be converted to the following constrained optimization problem,
\begin{equation}\label{eq:9}
\begin{split}
  & \underset{{{P}_{i}}}{\mathop{\min }}\,\sum\limits_{i=1}^{N}{{{C}_{i}}^{\delta }\left( {{P}_{i}} \right)\ } \\ 
 & s.t.\sum\limits_{i=1}^{N}{{{P}_{i}}={{P}_{D}}} \\ 
\end{split}
\end{equation}
Considers the incremental cost of the $i^{th}$ generating unit is defined by, ${{\tilde w}_i} = \frac{{d{C_i}^\delta \left( {{P_i}} \right)}}{{d{P_i}}}, i=1,2,...,N$. The optimal power dispatch problem is solved if the following two conditions are satisfied \cite{19,27}: Condition (3): ${\tilde w}_i={\tilde w}_j, i,j=1,2,...,N$ Condition (4): $\sum\limits_{i = 1}^N {{P_i}}={P_D}$.
%%%%%%%%%%%%%%%%%%%%%%%%%%%%%%%%%%%%%%%%%
%%%%%%%%%%%
%%%%%%%%%%%%%%%%%%%%%%%%%%%%%%%%%%%%%%%%%%%
\subsubsection{Distributed consensus Protocol  with the capacity limit for optimal power dispatch}
The proposed distributed adaptive consensus algorithm is proven to be stable by the following theorem.
\begin{theorem}
Given the generator capacity limit, the proposed adaptive consensus protocol described by \begin{equation}\label{eq:10}
\begin{array}{l}
{{\dot P}_i} = \sum\limits_{j \in {N_i}} {{a_{ij}}} \left( t \right)\left( {{{\tilde w}_j} - {{\tilde w}_i}} \right)\\
{{\dot a}_{ij}}\left( t \right) = {\beta _{ij}}{\left( {{{\tilde w}_i} - {{\tilde w}_j}} \right)^2},\;j \in {N_i}
\end{array}
\end{equation}
 is stable in solving the modified economic load dispatch problem defined by (\ref{eq:9}); where
 \begin{equation}\label{eq:ddd}
     {{{\tilde w}_i}}= {b_i} + 2{c_i}{P_i} - \frac{\delta }{{{P_i} - {P_i}^{\min }}} + \frac{\delta }{{{P_i}^{\max } - {P_i}}}
 \end{equation}
 where $\beta_{ij}=\beta_{ji}$ are positive scalars, $a_{ij}(t)$ is the entries of the adjacency matrix.
\end{theorem}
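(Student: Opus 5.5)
The plan has two parts. First I show that supply and demand stay balanced, which gives Condition (4). Then I build a Lyapunov function in the power variables that shows the incremental costs $\tilde w_i$ reach consensus, which gives Condition (3). Unlike Theorem 1, the dynamics (\ref{eq:10}) act on $P_i$ rather than on $\tilde w_i$. So the averaging argument has to be done on $\sum_i P_i$ directly.

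\textbf{Balance.} Since $a_{ij}(t)=a_{ji}(t)$ for all $t$, the update $\sum_{i}\sum_{j\in N_i}a_{ij}(t)(\tilde w_j-\tilde w_i)$ is a sum of antisymmetric terms. Hence
\begin{equation*}
\frac{d}{dt}\sum_{i=1}^N P_i=0 .
\end{equation*}
I choose feasible initial powers with $\sum_i P_i(0)=P_D$ and $P_i^{\min}<P_i(0)<P_i^{\max}$. Then $\sum_i P_i(t)=P_D$ for all $t\ge 0$, which is Condition (4).

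\textbf{Lyapunov function.} For consensus I take the natural convex energy
\begin{equation*}
V=\sum_{i=1}^N\bigl(C_i^\delta(P_i)-C_i^\delta(P_i^*)-\tilde w^*(P_i-P_i^*)\bigr)+\sum_{i=1}^N\sum_{j\in N_i}\frac{1}{4\beta_{ij}}(\theta_{ij}-a_{ij}(t))^2 .
\end{equation*}
Here $P^*$ is the unique optimizer of (\ref{eq:9}) and $\tilde w^*$ is its common incremental cost. The first sum is nonnegative because each $C_i^\delta$ is strictly convex: $d^2C_i^\delta/dP_i^2 = 2c_i+\delta/(P_i-P_i^{\min})^2+\delta/(P_i^{\max}-P_i)^2\ge 2c_i>0$.

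Its derivative is $\sum_i(\tilde w_i-\tilde w^*)\dot P_i$. The $\tilde w^*$ part vanishes by the balance step. What remains is $\sum_i\tilde w_i\dot P_i=-\tilde w^TL(t)\tilde w=-\frac12\sum_i\sum_{j\in N_i}a_{ij}(t)(\tilde w_j-\tilde w_i)^2$. The adaptive part contributes $-\frac12\sum_i\sum_{j\in N_i}(\theta_{ij}-a_{ij})(\tilde w_i-\tilde w_j)^2$. Exactly as in the proof of Theorem 1, the $a_{ij}$ terms cancel. This leaves
\begin{equation*}
\dot V=-\frac12\sum_{i}\sum_{j\in N_i}\theta_{ij}(\tilde w_j-\tilde w_i)^2\le-\theta_0\,\tilde w^T\tilde L\tilde w\le 0 .
\end{equation*}
Here $\tilde L$ is the constant Laplacian of the fixed connected graph.

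\textbf{Main obstacle.} The hard step is keeping each $P_i$ strictly inside $(P_i^{\min},P_i^{\max})$, so that $\tilde w_i$ stays finite and smooth.
\begin{itemize}
\item Since $V$ is nonincreasing, $V(t)\le V(0)<\infty$.
\item The barrier term $-\delta\ln(P_i-P_i^{\min})$ makes $C_i^\delta$ blow up at the boundary.
\item $V$ contains $C_i^\delta(P_i)$ minus only a linear term in $P_i$.
\end{itemize}
Together these give sublevel-set invariance, so each $P_i$ stays in a compact subinterval of the open box. This also gives boundedness of $a_{ij}(t)$ and of $\tilde w$.

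With boundedness in hand, I apply Barbalat's lemma (or LaSalle's invariance principle, since the reduced system is autonomous in $(P,a)$). The integral $\int_0^\infty \tilde w^T\tilde L\tilde w\,dt$ is finite and its integrand is uniformly continuous, so $\tilde w^T\tilde L\tilde w\to0$. Connectivity then gives $\tilde w_i-\tilde w_j\to0$, which is Condition (3). Combined with the balance step and strict convexity, $P\to P^*$. The adaptive gains are monotone nondecreasing and bounded, so they converge to finite constants.
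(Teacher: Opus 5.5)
Your proof is correct, and it takes a genuinely different route from the paper's. The paper mirrors Theorem 1. It works with incremental-cost errors $\hat e_i=\tilde w_i-\hat w$ and uses $V=\frac12\hat e^T\hat e+\sum_i\sum_{j\in N_i}\frac{\gamma}{4\beta_{ij}}(\theta_{ij}-a_{ij})^2$. To conserve the average $\hat w$ and close the error dynamics as $\dot{\hat e}=-\gamma L(t)\hat e$, it sets $d\tilde w_i/dP_i=\gamma$, a single positive constant for all generators. It then reads off consensus from $\dot V\le-\gamma\lambda_2(\tilde L)\theta_0\hat e^T\hat e$, and gets Condition (4) from $\sum_i\dot P_i=0$ exactly as you do.

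In fact $d\tilde w_i/dP_i=2c_i+\delta/(P_i-P_i^{\min})^2+\delta/(P_i^{\max}-P_i)^2$ varies with $i$ and along trajectories. The true dynamics are therefore $\dot{\tilde w}=-\Gamma(P)L(t)\tilde w$ with a non-scalar diagonal $\Gamma$. The mean of $\tilde w$ then drifts, and $\frac{d}{dt}\frac12\hat e^T\hat e=-\hat e^T\Gamma L(t)\hat e$ is not sign-definite. So the paper's shortcut only covers identical, constant slopes.

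Your Bregman-type energy in the power variables sidesteps this. Its derivative pairs $\tilde w_i-\tilde w^*$ directly with $\dot P_i$, so the heterogeneous, state-dependent curvature never enters. The $\tilde w^*$ term vanishes by power balance, and the gain term cancels the $a_{ij}$-weighted sum with coefficients $1/(4\beta_{ij})$ and no $\gamma$.

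Your route also supplies what the paper leaves implicit or only asserts in remarks and simulations:
\begin{itemize}
\item forward invariance of the open box, so $\tilde w_i$ stays finite and solutions exist for all time;
\item boundedness and convergence of the adaptive gains;
\item a Barbalat/LaSalle step justified by that boundedness, instead of the bare claim that $\dot V=0$ only when $\hat e=0$;
\item convergence of $P$ to the optimizer $P^*$, not just agreement of incremental costs.
\end{itemize}
The paper's version buys brevity and a uniform template with Theorem 1. Yours is the argument that actually handles the barrier-modified, heterogeneous costs.
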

\textit{Proof}:
 The incremental cost consensus is achieved if $\mathop {\lim }\limits_{t \to \infty } \left( {{\tilde w_i} - {\tilde w_j}} \right)  = 0$.  From equation (\ref{eq:ddd}), one can evaluate the following
\begin{equation}\label{eq:11}
\frac{{d{{\tilde w}_i}}}{{d{P_i}}} = 2{c_i} + \frac{\delta }{{{{\left( {{P_i} - {P_i}^{\min }} \right)}^2}}} + \frac{\delta }{{{{\left( {{P_i}^{\max } - {P_i}} \right)}^2}}} > 0    
\end{equation}
Let \begin{equation*}
\frac{{d{{\tilde w}_i}}}{{d{P_i}}} = \gamma
\end{equation*} where $\gamma$ is always a positive value. Now, \[\frac{{d{{\tilde w}_i}}}{{dt}} = \frac{{d{{\tilde w}_i}}}{{d{P_i}}}\frac{{d{P_i}}}{{dt}} = \gamma \frac{{d{P_i}}}{{dt}}\] 
Let us assume the average consensus point is $\hat w\left( t \right) = \frac{1}{N}\sum\limits_{i = 1}^N {{\tilde {w}_i}\left( t \right)}$. Since ${a_{ij}}\left( t \right) = {a_{ji}}\left( t \right),\;j \in {N_i}$, then the following can be written,
\begin{equation}\label{eq:ll}
\dot {\hat {w}} = \frac{\gamma}{N}\sum\limits_{i = 1}^N {\sum\limits_{j \in {N_i}} {{a_{ij}}} \left( t \right)\left( {{\tilde {w}_j} - {\tilde {w}_i}} \right) = 0}
\end{equation}
\par Now for condition 3, let ${\hat{e}_i} = {\tilde {w}_i} - \bar w,\,\forall i = 1,...,N$, therefore the modified incremental cost consensus for all generating units will be achieved if $\mathop {\lim }\limits_{t \to \infty } \left\| {{\hat{e}_i}\left( t \right)} \right\| = 0,\,\forall i = 1,...,N$. By using (\ref{eq:ll}) one can obtain the following,
\begin{equation}
{\dot {\hat {e}}_i} = \gamma\sum\limits_{j \in {N_i}} {{a_{ij}}} \left( t \right)\left( {{\hat{e}_j} - {\hat{e}_i}} \right)
\end{equation}
with ${{\dot a}_{ij}}\left( t \right) = {\beta _{ij}}{\left( {{\hat{e}_i} - {\hat{e}_j}} \right)^2},\;j \in {N_i}$. Now the aggregated form of the error dynamics can be written as,
\begin{equation}\label{eq:gg}
\dot {\hat{e}} =  -\gamma L\left( t \right)\hat{e}\left( t \right)
\end{equation}
where $\hat{e} = {\left[ {\begin{array}{*{20}{c}}
{{\hat{e}_1}\left( t \right)}&.&.&{{\hat{e}_N}\left( t \right)}
\end{array}} \right]^T}$, next to show the stability analysis the following Lyapunov energy function is considered for system (\ref{eq:gg}),
\begin{equation}
V = \frac{1}{2}\hat{e}{\left( t \right)^T}\hat{e}\left( t \right) + \sum\limits_{i = 1}^N {\sum\limits_{j \in {N_i}} {\frac{\gamma}{{4{\beta _{ij}}}}} } {\left( {{\theta _{ij}} - {a_{ij}}\left( t \right)} \right)^2}
\end{equation}
where ${\theta _{ij}} = {\theta _{ji}} \ge {\theta _0}$ if ${a_{ij}}\left( 0 \right) > 0$ otherwise ${\theta _{ij}} = {\theta _{ji}} = 0$, $\theta_0$ is a given positive constant. Using (\ref{eq:10}), the following can be obtained,
\begin{equation}\label{eq:hh}
\dot V =  - \gamma \hat{e}{\left( t \right)^T}L\left( t \right)\hat{e}\left( t \right) - \frac{\gamma}{2}\sum\limits_{i = 1}^N {\sum\limits_{j \in {N_i}} {\left( {{\theta _{ij}} - {a_{ij}}\left( t \right)} \right){{\left( {{e_j} - {e_i}} \right)}^2}} }
\end{equation}
 Now using the property of the Laplacian matrix for undirected graph topology and the derivation in Theorem 1, the following can be obtained,
\begin{equation}\label{eq:kk}
\dot V \le  - \gamma{\lambda _2}\left( \tilde L \right){\theta _0}\hat{e}{\left( t \right)^T}\hat{e}\left( t \right)
\end{equation}
\par From (\ref{eq:kk}), it can be concluded that $\dot V$ is bounded, and $\dot V=0$ only if $\hat e=0$ because the constant term inside the inequality (\ref{eq:kk}) is always positive definite. Now based on the above mentioned analysis  one can show $\mathop {\lim }\limits_{t \to \infty } \hat e \to 0$ i.e. $\mathop {\lim }\limits_{t \to \infty } {\hat e_i} \to 0$. From these results, it can be said that a consensus of modified incremental cost is reached using the proposed fully distributed protocol. For condition 4, from (\ref{eq:10})  one can write $\sum\limits_{i = 1}^N {{{\dot P}_i}\left( t \right) = 0}$, which indicates that $\sum\limits_{i = 1}^N {{P_i}} \left( t \right) = \sum\limits_{i = 1}^N {{P_i}} \left( 0 \right) = {P_D}$. If the initial condition i.e. $\sum\limits_{i = 1}^N {{P_i}} \left( 0 \right) = {P_D}$ is met, then the balance between the supply and demand of powers is maintained.
\hfill $\blacksquare$\\
\subsection{Optimal power dispatch with Time-Varying Demand}
In this subsection to cater to the time-varying demand, a new concept of virtual nodes or dummy nodes has been introduced, as shown in Fig. 2., to solve the optimal power dispatch problem with generator capacity limits. In this case, the optimization problem  is given by 
\begin{equation}\label{eq:cc}
\begin{array}{lll}
     & \mathop {\min }\limits_{{P_i}} \sum\limits_{i =1 }^{2N} {{C_i}\left( {{P_i}} \right) = } \mathop {\min }\limits_{{P_i}} \sum\limits_{i = 1}^{2N} {{a_i} + {b_i}{P_i}}  + {c_i}{P_i}^2\\
     & s.t\, (i) \sum\limits_{i = 1}^{2N} {{P_i}}  = {P_D},\, (ii) {P_i}^{\min } \le {P_i} \le {P_i}^{\max } ,\\
     &\forall i= 1,2,....,N, (iii){P_i} \ge 0 ,\,\forall i= N+1,N+2,....,2N 
\end{array}
\end{equation}
\begin{figure}[!ht]
	\centering
	%\rule{12.8cm}{7.2cm}
	\includegraphics[scale=0.3]{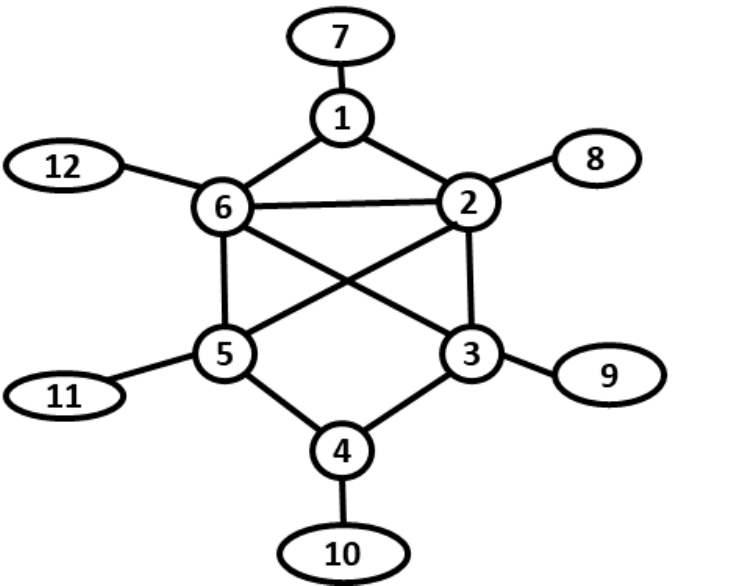}
	\caption{Network Topology}
\end{figure}
here the the total number of generators is $2N$ of which $i=1,....,N$ is considered as real generators nodes while $i=N+1,....,2N$ is considered as dummy generating nodes, and the cost is assumed to be high for dummy nodes. So, for optimization with generator capacity, the modified  cost function for all generator nodes i.e. $i=1,2,..,2N$ are considered
\begin{equation}
\begin{split}\label{eq:ccc} 
 & {{C}_{i}}^{\delta }\left( {{P}_{i}} \right)=\sum\limits_{i=1}^{2N}{\left( {{a}_{i}}+{{b}_{i}}{{P}_{i}}+{{c}_{i}}{{P}_{i}}^{2} \right)}\\
 &-\delta \left( \sum\limits_{i=1}^{N}{\left( \ln \left( {{P}_{i}}-{{P}_{i}}^{\min } \right)+\ln \left( {{P}_{i}}^{\max }-{{P}_{i}} \right) \right)}\right)\\
 &-\delta \sum\limits_{i = N + 1}^{2N} {\ln \left( {{P_i}} \right)}\\  
\end{split}
\end{equation}
%Then for dummy nodes i.e. $i=N+1,....,2N$, assuming there are no generator limits, the modified cost is 
%\begin{equation*}
%{C_i}^\delta \left( {{P_i}} \right) = \sum\limits_{i = N+1}^{2N} {{a_i} + {b_i}{P_i} + {c_i}{P_i}^2
%\end{equation*}
%So, the modified constrained optimization problem is the following,
%\begin{equation}\label{eq:bb}
%\begin{split}
  %& \underset{{{P}_{i}}}{\mathop{\min }}\,\sum\limits_{i=1}^{2N}{{{C}_{i}}^{\delta }\left( {{P}_{i}} \right)\ } \\ 
 %& s.t.\sum\limits_{i=1}^{2N}{{{P}_{i}}={{P}_{D}}} \\ 
%\end{split}
%\end{equation}
In this case, the proposed adaptive distributed protocol is the same as (\ref{eq:10}) with new initial conditions satisfying ${P_i}\left( 0 \right) + {P_{N + i}}\left( 0 \right) = {P_{{D_i}}},\,\forall i = 1,...,N$ is the local demand of the $i^{th}$ generator bus. To have a negligible demand drop over the dummy node, $c_i$ for the dummy nodes, i.e. for $i=N+1,...,2N$, is considered to be large in comparison to real generating nodes, moreover the $a_i, b_i$ for $i=N+1,...,2N$ considered to be very low. 
\begin{theorem}
The following  adaptive consensus protocol for $\forall i \in \left\{ {1,2,....,2N} \right\}$
\begin{equation}
\begin{array}{l}
{{\dot P}_i} = \sum\limits_{j \in {N_i}} {{a_{ij}}} \left( t \right)\left( {{{\tilde w}_j} - {{\tilde w}_i}} \right)\\
{{\dot a}_{ij}}\left( t \right) = {\beta _{ij}}{\left( {{{\tilde w}_i} - {{\tilde w}_j}} \right)^2},\;j \in {N_i}
\end{array}
\end{equation}
can effectively handle the modified economic load dispatch problem with time-varying demand described in \ref{eq:ccc} while considering the generator capacity constraint, ensuring stability in the system; where 
\begin{equation}\label{eq:mm}
\begin{array}{l}
    \tilde{{w}_{i}}=I{C_i}^\delta  = {b_i} + 2{c_i}{P_i} - \frac{\delta }{{{P_i} - {P_i}^{\min }}} + \frac{\delta }{{{P_i}^{\max } - {P_i}}},\\ \forall i=1,2,....,N\\
    \tilde{{w}_{i}}=I{C_i}^\delta = {b_i} + 2{c_i}{P_i}-\frac{\delta }{{P_i}},\,\forall i=N+1,N+2,....,2N
\end{array}
    \end{equation} is the modified incremental cost.
\end{theorem}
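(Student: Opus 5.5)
The plan is to reduce Theorem 3 to Theorem 2 on an augmented network with $2N$ nodes. The real generators and the dummy nodes of Fig.~2 are all agents of one undirected connected graph, and each dummy node $N+i$ is linked at least to its real bus $i$. The argument has three parts: supply--demand balance, monotonicity of the modified incremental cost, and convergence via a Lyapunov function.

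\textbf{Supply--demand balance.} Since $a_{ij}(t)=a_{ji}(t)$ is preserved by the adaptive law (because $\beta_{ij}=\beta_{ji}$ and $(\tilde w_i-\tilde w_j)^2$ is symmetric), summing the $\dot P_i$ equations over $i=1,\dots,2N$ gives $\sum_{i=1}^{2N}\dot P_i=0$. Together with the initialization $P_i(0)+P_{N+i}(0)=P_{D_i}$, this yields $\sum_{i=1}^{2N}P_i(t)=\sum_i P_{D_i}=P_D$ for all $t\ge0$, exactly as in the last step of Theorem 2. When the demand $P_{D_i}$ changes at some instant, only the local dummy state is reset. The argument is then restarted from that instant, so balance holds piecewise on each constant-demand interval.

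\textbf{Monotonicity of $\tilde w_i$.} From (\ref{eq:mm}), for real nodes $d\tilde w_i/dP_i$ is given by (\ref{eq:11}). For dummy nodes,
\[
\frac{d\tilde w_i}{dP_i}=2c_i+\frac{\delta}{P_i^2}>0 .
\]
So $\gamma_i:=d\tilde w_i/dP_i>0$ on the admissible domain for every $i$. Moreover, each barrier term forces $\tilde w_i\to\pm\infty$ at the boundary. As long as $V$ (below) stays bounded, the $\tilde w_i$ stay bounded, and hence the $P_i$ remain strictly inside their feasible intervals. This gives forward invariance of the domain and a uniform positive lower bound $\gamma_i\ge\gamma_{\min}>0$ along trajectories.

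\textbf{Convergence.} Writing $\dot{\tilde w}_i=\gamma_i\sum_{j\in N_i}a_{ij}(\tilde w_j-\tilde w_i)$, I would not use the plain average of the $\tilde w_i$, since it is not invariant when the $\gamma_i$ differ. Instead I would take the Lyapunov function
\[
V=\sum_{i=1}^{2N}\int_{P_i^*}^{P_i}\bigl(\tilde w_i(s)-\lambda^*\bigr)\,ds+\sum_{i=1}^{2N}\sum_{j\in N_i}\frac{1}{4\beta_{ij}}\bigl(\theta_{ij}-a_{ij}(t)\bigr)^2 .
\]
Here $(P^*,\lambda^*)$ is the KKT point of (\ref{eq:cc}) with the barrier cost (\ref{eq:ccc}); it exists and is unique by strict convexity. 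The first term is nonnegative by monotonicity of $\tilde w_i$. Differentiating, and using $\sum_i\dot P_i=0$ to cancel the $\lambda^*$ term, gives
\[
\dot V=-\tilde w^TL(t)\tilde w+\tfrac12\sum_{i}\sum_{j\in N_i}\bigl(a_{ij}-\theta_{ij}\bigr)(\tilde w_i-\tilde w_j)^2=-\theta_0\text{-weighted Laplacian form}\le -\theta_0\,\tilde w^T\tilde L\,\tilde w .
\]
This follows the same cancellation as in Theorem 1. Hence $V$ is bounded, each $a_{ij}$ is bounded and nondecreasing, and $\int_0^\infty \tilde w^T\tilde L\tilde w\,dt<\infty$. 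Barbalat's lemma then gives $\tilde w^T\tilde L\tilde w\to0$; uniform continuity follows from the boundedness of all states and of $\gamma_i$. Since the augmented graph is connected, $\lambda_2(\tilde L)>0$, so $\tilde w_i-\tilde w_j\to0$ for all $i,j\in\{1,\dots,2N\}$. Combined with the balance condition, this is exactly the KKT system of (\ref{eq:cc}), and therefore $P\to P^*$.

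\textbf{Main obstacle.} The main difficulty is the state-dependent, node-dependent gains $\gamma_i$, which the paper's earlier proofs treated as a single constant. The integral-type storage function above is my first choice for avoiding this. A secondary point is the near-zero dummy allocation, which follows from the large $c_i$ and small $a_i,b_i$ on dummy nodes. This should be argued as a consequence of $\tilde w_{N+i}=\lambda^*$ giving $P_{N+i}^*\approx \delta/\lambda^*$ for large $c_i$; it is a remark on the solution, not part of the stability proof.
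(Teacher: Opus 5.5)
Your proof is correct, but it takes a different route from the paper's, and on the key step it is the sounder of the two.

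\textbf{The paper's route.} Its proof notes that $d\tilde w_i/dP_i>0$ for real and dummy nodes. It then sets $d\tilde w_i/dP_i=\gamma$, treating this as one constant shared by all $2N$ nodes, and concludes that the average $\hat w=\frac{1}{2N}\sum_i\tilde w_i$ is invariant. For the rest it defers to Theorem 2. There the Lyapunov function is quadratic in $\hat e_i=\tilde w_i-\hat w$ plus the term $\frac{\gamma}{4\beta_{ij}}(\theta_{ij}-a_{ij})^2$, and the key bound is $\dot V\le-\gamma\lambda_2(\tilde L)\theta_0\hat e^T\hat e$. This template lets Theorems 1--3 share one short proof. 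However, it requires the slopes to be identical across nodes and constant in time. That fails here: the $c_i$ differ and the barrier terms make each slope depend on $P_i$. So neither the invariance of $\hat w$ nor the error dynamics $\dot{\hat e}=-\gamma L\hat e$ actually holds.

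\textbf{Your route.} Your storage function $\sum_i\int_{P_i^*}^{P_i}(\tilde w_i(s)-\lambda^*)\,ds$, a Bregman divergence of $C_i^\delta$, avoids this problem. You exploit conservation of $\sum_iP_i$, which the protocol, written in $\dot P_i$ with symmetric $a_{ij}$, really does conserve, rather than conservation of $\sum_i\tilde w_i$. The resulting identity
\[
\dot V=-\tfrac12\sum_i\sum_{j\in N_i}\theta_{ij}(\tilde w_i-\tilde w_j)^2
\]
holds for arbitrary heterogeneous, state-dependent slopes, because the slopes never enter it. Your approach also establishes several things the paper only asserts:
\begin{itemize}
\item forward invariance of the barrier domain, since the divergence blows up at $P_i^{\min}$, $P_i^{\max}$ and $0$;
\item boundedness of the $a_{ij}$;
\item a genuine Barbalat step, in place of the paper's appeal to $\tilde L$ being positive definite, which is false since $\tilde L\mathbf 1=0$;
\item convergence to the actual optimizer $P^*$, not merely consensus of the $\tilde w_i$.
\end{itemize}

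\textbf{What your route costs.} You need the KKT point in advance. Its existence needs feasibility, $P_D>\sum_{i\le N}P_i^{\min}$, in addition to strict convexity. Your piecewise restart for demand changes also needs each reset to keep $P_{N+i}>0$. This holds for the demand increase simulated in the paper, but not automatically for a decrease.

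\textbf{Correction to your side remark.} Solving $b_{N+i}+2c_{N+i}P-\delta/P=\lambda^*$ gives
\[
P_{N+i}^*=\frac{(\lambda^*-b_{N+i})+\sqrt{(\lambda^*-b_{N+i})^2+8c_{N+i}\delta}}{4c_{N+i}},
\]
which behaves like $\sqrt{\delta/(2c_{N+i})}$ for large $c_{N+i}$, not like $\delta/\lambda^*$. Your qualitative conclusion, that the dummy allocation is small, is unaffected.
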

\textit{Proof:} 
The incremental cost consensus is achieved if $\mathop {\lim }\limits_{t \to \infty } \left( {{\tilde w_i} - {\tilde w_j}} \right)  = 0$.  From equation (\ref{eq:mm}), one can evaluate the following for $i=1,...,N$. 
\begin{equation*}
\frac{{d{{\tilde w}_i}}}{{d{P_i}}} = 2{c_i} + \frac{\delta }{{{{\left( {{P_i} - {P_i}^{\min }} \right)}^2}}} + \frac{\delta }{{{{\left( {{P_i}^{\max } - {P_i}} \right)}^2}}} > 0    
\end{equation*}
for $i=N+1,....,2N$
\begin{equation*}
\frac{{d{{\tilde w}_i}}}{{d{P_i}}} = 2{c_i} + \frac{\delta }{{{P_i}^2}} > 0
\end{equation*}
So, for $i=1,.....,2N$ let consider the following,
\begin{equation*}
\frac{{d{{\tilde w}_i}}}{{d{P_i}}} = \gamma
\end{equation*}
where $\gamma$ is always a positive value. Now, \[\frac{{d{{\tilde w}_i}}}{{dt}} = \frac{{d{{\tilde w}_i}}}{{d{P_i}}}\frac{{d{P_i}}}{{dt}} = \gamma \frac{{d{P_i}}}{{dt}}\] 
Then assume the average consensus point is $\hat w\left( t \right) = \frac{1}{2N}\sum\limits_{i = 1}^{2N} {{\tilde {w}_i}\left( t \right)}$. Since ${a_{ij}}\left( t \right) = {a_{ji}}\left( t \right),\;j \in {N_i}$, so the following can be written,
\begin{equation*}
\dot {\hat {w}} = \frac{\gamma}{2N}\sum\limits_{i = 1}^{2N} {\sum\limits_{j \in {N_i}} {{a_{ij}}} \left( t \right)\left( {{\tilde {w}_j} - {\tilde {w}_i}} \right) = 0}
\end{equation*}
Now the stability proof of the proposed approach exactly follows the proof of Theorem 2 provided the topologies remain constant. \hfill $\blacksquare$
\par The process of distributed adaptive consensus for solving the EDP is described in the following pseudo-code at the ${i}^{th}$ generating node.
\vspace{8pt}
\hrule
\begin{enumerate}
    \item Collect information ${{\Tilde{w}}_{j}}\,\forall j\in {{N}_{i}}$
    \item Initialize Active Power Outputs $\left( {{P}_{i}}\left( 0 \right) \right)$
    \item Calculate the Incremental Costs $\left({\tilde w_i} =  I{{C}_{i}}={{b}_{i}}+2{{c}_{i}}{{P}_{i}} \right) - \delta \left( {\ln \left( {{P_i} - {P_i}^{\min }} \right) + \ln \left( {{P_i}^{^{\max }} - {P_i}} \right)} \right)$ \label{algo2}
    \item Calculate ${\sigma _i} = {\beta _{ij}}{\left( {{\Tilde{w}_i} - {\Tilde{w}_j}} \right)^2}$ then  by integrating ${\sigma _i}$ evaluate $a_{ij}(t)$ \label{algo2}
    \item Apply control input ${u_i} = \sum\limits_{j \in {N_i}} {{a_{ij}}} \left( t \right)\left( {{\Tilde{w}_j} - {\Tilde{w}_i}} \right)$
    \item \textbf{If} ${{u}_{i}}\approx0$ \textbf{then}
    \subitem Stop
    \item \textbf{Else}
    \subitem Goto \ref{algo2}
\end{enumerate}
\hrule
\vspace{5pt}
\begin{remark}
So far, in this section, an adaptive consensus protocol is proposed and the incremental cost is modified due to the capacity of the generator limits consideration. The proposed protocol successfully solves the optimal power dispatch problem and this approach is different from the approach without a generator capacity limit. The optimal solution of the problem (\ref{eq:9}) under a very small $\delta$ will converge to the global optimal solution of the problem (\ref{eq:9}). Moreover, in the case of time-varying demand, the power demand information is no longer global information, it is considered as local information i.e. $P_{D_i}$.
\end{remark}
\subsection{Optimum Power Dispatch with Generator Communication Link Failure}
In the context of consensus-based power dispatch systems,  the communication link between the generators is essential. However, there may be instances where the communication link fails due to a variety of reasons, including hardware malfunction, network congestion, or environmental factors. In terms of network topology, this entails the addition or deletion of a specific number of edges or switching of topology.  In \cite{39} it has been investigated that it is still possible to attain consensus with switching topology.

\par In this case the graph $G$ is a discreet state, moreover it is described by  a collection of finite numbers of connected graphs, and represented as  ${G_n} = \left\{ G \right\}$. Under generator capacity limit, the consensus error dynamics described by (\ref{eq:gg}) can be rewritten as,
\begin{equation}\label{eq:bbb}
\dot {\hat {e}} =  - \gamma {L_{{G_k}}}\left( t \right)\hat e\left( t \right)
\end{equation}
where $k = s\left( t \right),\,{G_k} \in {G_n}$, and $s(t)$ is a switching signal. $L_{G_k}$ is the Laplacian matrix for $G_k$.
\begin{theorem}
Given the generator capacity limit, the proposed adaptive consensus protocol described by (\ref{eq:10}) is stable in solving the modified economic load dispatch problem defined by (\ref{eq:9}) for the switching topology $G_n$.
\end{theorem}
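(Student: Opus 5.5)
The plan is to reuse the Lyapunov function of Theorem 2 as a \emph{common} Lyapunov function across all topologies in $G_n$. This makes the switching signal $s(t)$ harmless. The argument has three parts: the balance condition (Condition 4), a uniform decrease estimate under switching, and convergence via Barbalat's lemma.

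\textbf{Balance condition.} Within every interval where $s(t)=k$ is constant, the active adjacency entries satisfy $a_{ij}(t)=a_{ji}(t)$. This holds because $\dot a_{ij}=\beta_{ij}(\tilde w_i-\tilde w_j)^2$ and $\beta_{ij}=\beta_{ji}$. Hence $\sum_i \dot P_i=0$ on each interval. Since $P_i(t)$ is continuous across switching instants, $\sum_i P_i(t)=\sum_i P_i(0)=P_D$ for all $t\ge 0$.

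\textbf{Common Lyapunov function.} I would let each possible edge $(i,j)$ of the union graph $\bigcup_{G_k\in G_n}G_k$ carry its own adaptive weight $a_{ij}(t)$. On edges absent from the current $G_k$, the weight is simply frozen. I would then take
\[
V=\tfrac12\hat e^T\hat e+\sum_{(i,j)\in\cup G_k}\frac{\gamma}{4\beta_{ij}}\left(\theta-a_{ij}(t)\right)^2 ,
\]
with a single constant $\theta\ge\theta_0>0$. This $V$ is continuous at switching instants because $\hat e$ and $a_{ij}$ do not jump. On an interval with $s(t)=k$, the computation of Theorem 1 and Theorem 2 goes through verbatim, with sums restricted to the edges of $G_k$. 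Frozen weights contribute nothing, so
\[
\dot V\le-\gamma\theta\,\hat e^T L_{G_k}^{0}\hat e\le-\gamma\theta_0\,\lambda_{\min}\hat e^T\hat e ,
\]
where $L^{0}_{G_k}$ is the unweighted Laplacian of $G_k$. Here $\lambda_{\min}=\min_{G_k\in G_n}\lambda_2(L^0_{G_k})>0$. This minimum is positive because $G_n$ is a finite collection of connected graphs and $\hat e\perp\mathbf 1$ (the average is invariant). The bound is therefore uniform in $k$.

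\textbf{Convergence and the main obstacle.} Since $V$ is nonincreasing along the piecewise flow, $\hat e$ and all $a_{ij}$ are bounded. The boundedness of the $a_{ij}$ comes from $V$, and monotonicity of each $a_{ij}$ then gives convergence to finite constants. Integrating the uniform bound gives $\int_0^\infty\|\hat e\|^2dt\le V(0)/(\gamma\theta_0\lambda_{\min})<\infty$. To conclude $\hat e\to0$, I would apply Barbalat's lemma. This needs $\dot{\hat e}$ to be bounded, which follows from boundedness of $a_{ij}$ and $\hat e$ and from $\gamma$ being bounded on the sublevel set.

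The step I expect to be hardest is the treatment of $\gamma$. In the paper it is a state-dependent positive factor $d\tilde w_i/dP_i$, which differs per node. As in Theorem 2, I would treat it as a scalar bounded below and above on the invariant set. On that set each $P_i$ stays strictly inside $(P_i^{\min},P_i^{\max})$ by the barrier, so the bounds hold, and the lower bound can replace $\gamma$ in the estimate. A dwell-time assumption is not needed, since the Lyapunov function is common to all modes. With $\hat e\to0$, Condition 3 follows, and together with Condition 4 the modified problem (\ref{eq:9}) is solved under switching.
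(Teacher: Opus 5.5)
Your route is the paper's own: reuse the Theorem~2 function as a common Lyapunov function for all modes. Your additions (frozen union-graph weights, a uniform $\lambda_{\min}$ over the finite family, and Barbalat in place of the paper's bare assertion of exponential stability) make it tighter. But the step you flagged is a genuine gap, and it is inherited from the paper's Theorem~2.

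\textbf{The gap.} Because $\gamma_i=d\tilde w_i/dP_i$ differs across nodes and along trajectories, the true dynamics are $\dot{\tilde w}=-\Gamma L_a(t)\tilde w$. Here $\Gamma=\mathrm{diag}(\gamma_1,\dots,\gamma_N)$ and $L_a(t)$ is the weighted Laplacian of the active $G_k$. Hence the average of $\tilde w$ is not invariant; only $\sum_iP_i$ is. Consequently $\hat e=(I-\tfrac1N\mathbf 1\mathbf 1^T)\tilde w$ obeys $\dot{\hat e}=-(I-\tfrac1N\mathbf 1\mathbf 1^T)\Gamma L_a(t)\hat e$, not $-\gamma L\hat e$, and
\begin{equation*}
\frac{d}{dt}\,\frac12\hat e^T\hat e=-\hat e^T\Gamma L_a(t)\hat e .
\end{equation*}
This is sign-indefinite on $\mathbf 1^\perp$. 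For the unit-weight path $1$--$2$--$3$ and $\hat e=(1,2,-3)^T$ it equals $-\gamma_1+12\gamma_2+15\gamma_3$, which is negative once $\gamma_1$ is large. That is exactly what happens when $P_1$ nears a barrier.

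More fundamentally, Theorem~1 works only because the unknown, growing $a_{ij}(t)$ cancel \emph{exactly}. That requires the same constant factor in front of $\hat e^TL_a\hat e$ and in the adaptive part of $V$. With a constant $c$ in the adaptive part you get
\begin{equation*}
\dot V=-\hat e^T(\Gamma-cI)L_a(t)\hat e-c\,\theta\,\hat e^TL^0_{G_k}\hat e .
\end{equation*}
Unless $\Gamma=cI$, the first term is sign-indefinite and scales with the a~priori unbounded $a_{ij}(t)$. Taking $c=\min_i\gamma_i$ does not help. So $\dot V\le-\gamma\theta_0\lambda_{\min}\|\hat e\|^2$ is not established, and ``replacing $\gamma$ by its lower bound'' cannot repair it.

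\textbf{A repair.} Measure the error in power coordinates, where $\Gamma$ never enters the derivative. Let $P^*$ be the minimizer of (\ref{eq:9}); it lies in the open box when $\sum_iP_i^{\min}<P_D<\sum_iP_i^{\max}$. Let $w^*$ be its common incremental cost, and take
\begin{equation*}
V=\sum_{i=1}^N\Big[C_i^\delta(P_i)-C_i^\delta(P_i^*)-w^*(P_i-P_i^*)\Big]+\sum_{i=1}^N\sum_{j\in N_i}\frac{1}{4\beta_{ij}}\big(\theta-a_{ij}(t)\big)^2 ,
\end{equation*}
with the second sum taken over the union graph. Since $\dot P=-L_a(t)\tilde w$ and $\mathbf 1^TL_a=0$, the first part has derivative $-(\tilde w-w^*\mathbf 1)^TL_a\tilde w=-\tilde w^TL_a\tilde w$. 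The Theorem~1 cancellation is now exact and gives $\dot V=-\theta\,\tilde w^TL^0_{G_k}\tilde w\le-\theta\lambda_{\min}\|\hat e\|^2$ in every mode.

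The first part is nonnegative by convexity and tends to $+\infty$ at the barriers. So sublevel sets of $V$ confine each $P_i$ to a compact subinterval of $(P_i^{\min},P_i^{\max})$. That is the only place bounds on $\gamma_i$ are needed, namely to bound $\dot{\hat e}$ for Barbalat. From there your argument (uniform $\lambda_{\min}$, integrability, Barbalat, balance from $\sum_i\dot P_i=0$) goes through unchanged. Finally, $P\to P^*$ follows from consensus, $\sum_iP_i=P_D$, and strict monotonicity of each $\tilde w_i$.
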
    
\textit{Proof:}
The stability of the switching system can be verified by the following Lyapunov function,
\begin{equation*}
V = \frac{1}{2}\hat{e}{\left( t \right)^T}\hat{e}\left( t \right) + \sum\limits_{i = 1}^N {\sum\limits_{j \in {N_i}} {\frac{\gamma}{{4{\beta _{ij}}}}} } {\left( {{\theta _{ij}} - {a_{ij}}\left( t \right)} \right)^2}
\end{equation*}
Next using the derivation of Theorem 2 in subsection C, the following can be written
\begin{equation}
\dot V \le  - \gamma {\theta _0}\hat e{\left( t \right)^T}{L_{{G_k}}}\hat e\left( t \right) \le 0
\end{equation}
This guarantees that $V(.)$ is a suitable Lyapunov function for the switching system (\ref{eq:bbb}), moreover consensus error vector $\hat e$ is exponentially stable. After this, from Theorem 2 it can also be verified that the economic load dispatch problem is solved.  
\begin{remark}
In this part, switching topology is used to study what happens when a generator link fails. Also, if one generator in the network is completely cut off from the rest of the generators, meaning it is not getting or sending any information, then the disconnected generator stops updating its incremental cost and stops following the other generator, which can be verified by equation (\ref{eq:10}).
\end{remark}
\section{Simulation Examples}
\subsection{IEEE 30 Bus System without Generating Limits with Fixed Topology} \label{5.1}
 Simulation example is provided in this part to demonstrate the validity of the suggested distributed economic power dispatch algorithms. Here an optimal economic dispatch of IEEE-30 bus system \cite{19} without generating limits is demonstrated using the proposed approach. The parameters of the cost function are taken from Table 1, and the $\delta$ is assumed to be 10. The test model of the IEEE-30 bus system \cite{shahidehpour2004communication} can be seen in Fig1(a). and the topology among the generating units in the undirected and connected network is shown in Fig1(b).
 \begin{table}[h!]
 	\centering
 	\caption{Parameters of cost function in IEEE-30 bus system}
 	\label{tab:1}
 	\begin{tabular}{|l|c|c|c|c|c|}
 		\hline
 		Unit (Gen No.) & $a_i$ & $b_i$ & $c_i$ & $P_i^{\min}$ & $P_i^{\max}$ \\ \hline
 		1 (1) & 0 & 2.00 & 0.003750 & 50 & 200 \\ \hline
 		2 (2) & 0 & 1.75 & 0.001750 & 100 & 400 \\ \hline
 		3 (5) & 0 & 1.00 & 0.062500 & 15 & 50 \\ \hline
 		4 (8) & 0 & 3.25 & 0.008324 & 10 & 35 \\ \hline
 		5 (11) & 0 & 3.00 & 0.025000 & 10 & 30 \\ \hline
 		6 (13) & 0 & 3.00 & 0.025000 & 12 & 40 \\ \hline
 	\end{tabular}
 \end{table}
\par Let us assume the power demand $P_D=531.34 MW$, and the initial generating powers of all units are ${P_0} = \left\{ {133.3,287.9,40,38,15,17} \right\}$. Now the consensus of  incremental cost of all the generating units can be seen by Fig 3 and from Fig 4. ,it can be verified that the adaptive  weights  settle down some positive fixed value. Moreover the Fig. 5. verifies the active power outputs.
%Finally, the optimal output power of generators is 139.9699, 371.7272, 16.4345, 10.0137, 10.0111, 12.0091 respectively.

\begin{figure}[!ht]
	\centering
    \captionsetup{justification=centering}
	%\rule{12.8cm}{7.2cm}
	\includegraphics[width=0.99\linewidth,height=40mm, trim=10 0 70 20,clip]{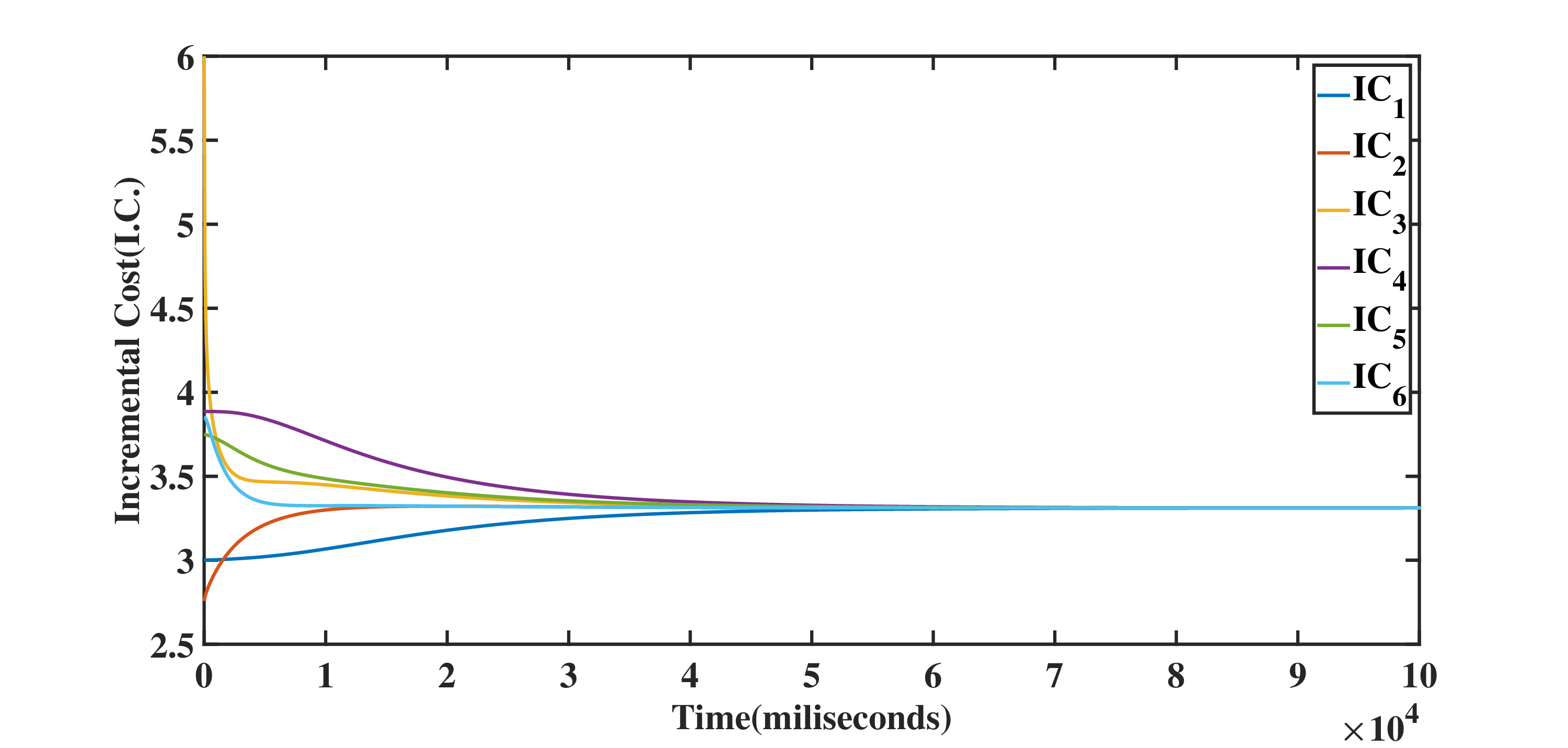}
	\caption{Incremental cost consensus}
\end{figure}

\begin{figure}[!ht]
	\centering
    \captionsetup{justification=centering}
	%\rule{12.8cm}{7.2cm}
	\includegraphics[width=0.99\linewidth,height=40mm, trim=10 0 70 20,clip]{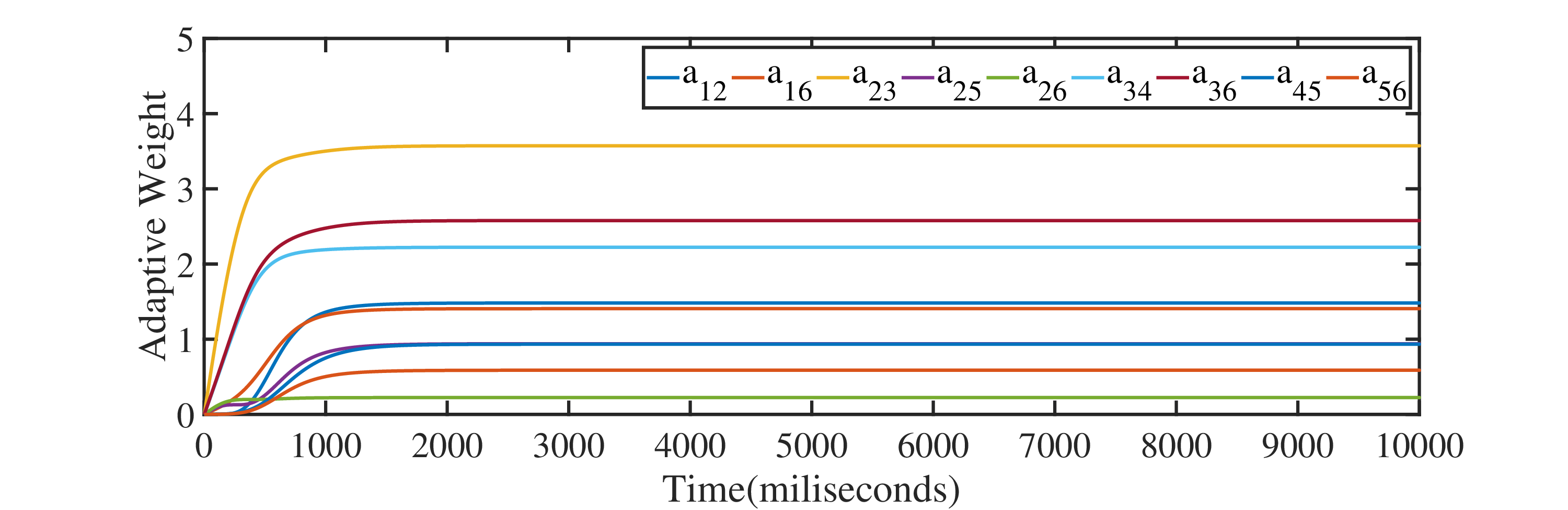}
	\caption{Adaptive weights $a_{ij}, i,j=1,2,...,N$}
\end{figure}

\begin{figure}[!ht]
	\centering
    \captionsetup{justification=centering}
	%\rule{12.8cm}{7.2cm}
	\includegraphics[width=0.99\linewidth,height=40mm, trim=10 0 70 10,clip]{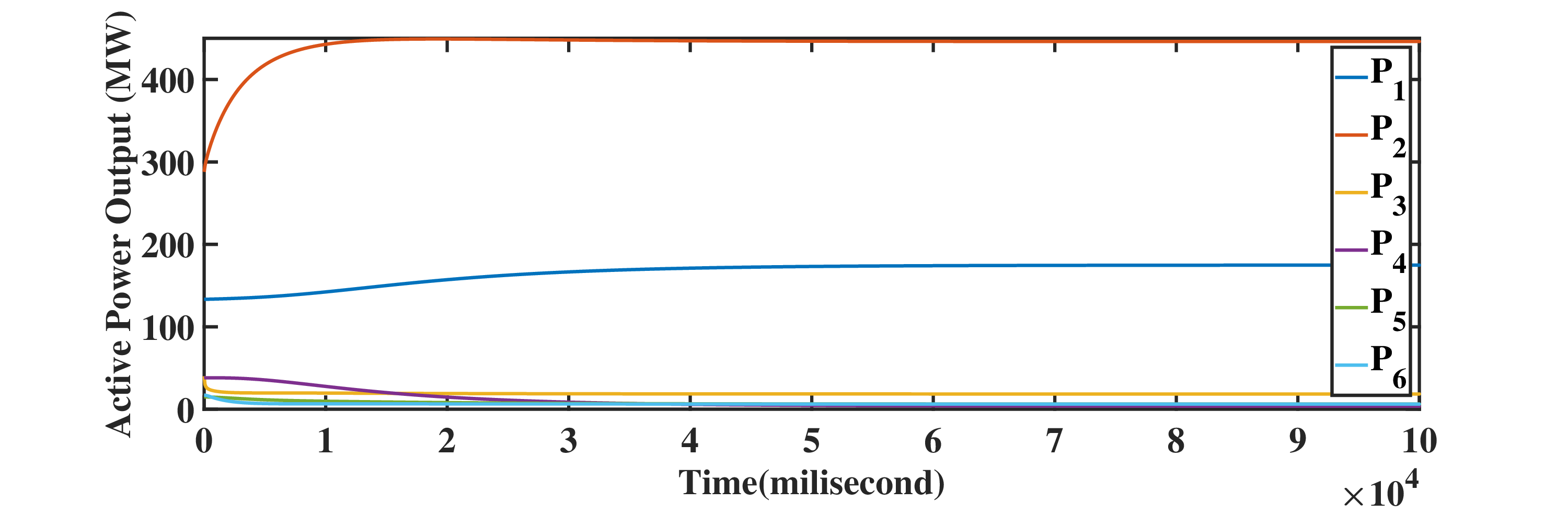}
	\caption{Output powers}
\end{figure}

\subsection{IEEE 30 Bus System with Generator Limits with Fixed Topology}\label{5.2}
 Simulation example is provided in this part to demonstrate the validity of the suggested distributed economic power dispatch algorithms. Here an optimal economic dispatch of IEEE-30 bus system \cite{19} with generating limits is demonstrated using the proposed approach. The parameters of the cost function are taken from Table 1, and the $\delta$ is considered as 10.

 \par The power demand is assumed to be $P_D=513.34 MW$, and the initial generating powers of all units, in this case, are ${P_0} = \left\{ {133.36,287.98,40,20,15,17} \right\}$. The modified incremental cost consensus can be verified in Fig 6. Moreover, in this case from Fig 7, it can be verified that the  adaptive  weights again settle down some positive fixed higher values and  from Fig. 8. it can be verified that all the active power outputs are always within the specified limit.

\begin{figure}[!ht]
	\centering
	%\rule{12.8cm}{7.2cm}
	\includegraphics[width=0.99\linewidth,height=40mm, trim=10 0 70 10,clip]{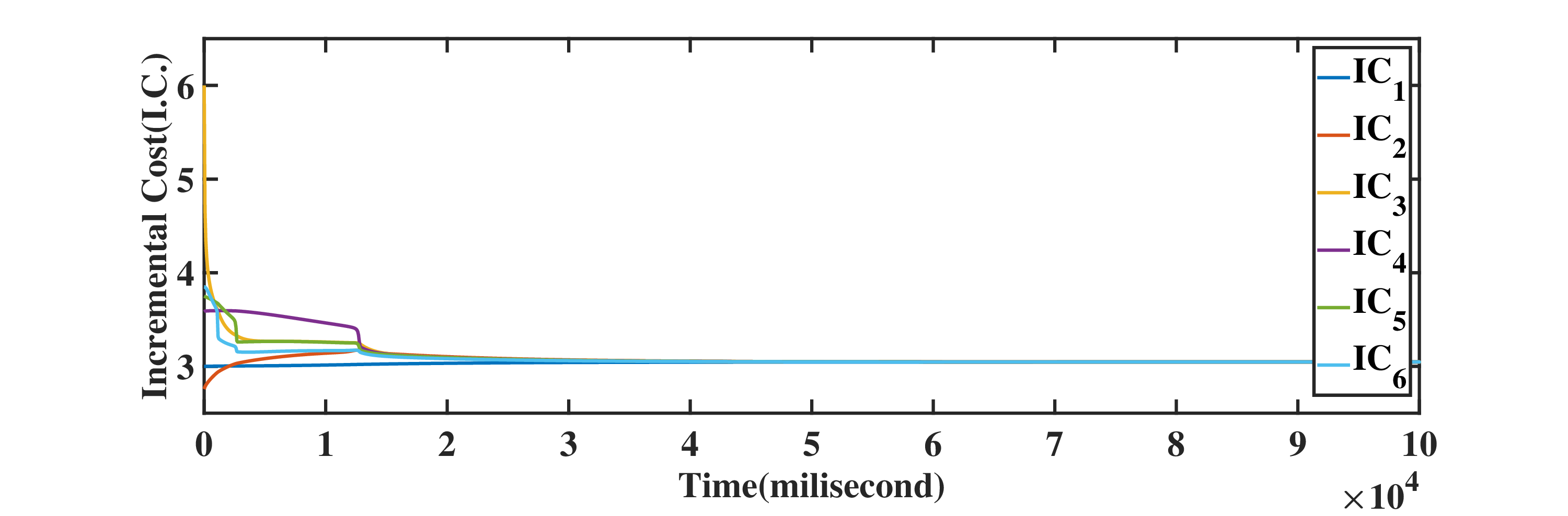}
	\caption{Incremental cost consensus}
\end{figure}

\begin{figure}[!ht]
	\centering
	%\rule{12.8cm}{7.2cm}
	\includegraphics[width=0.99\linewidth,height=40mm, trim=10 0 70 10,clip]{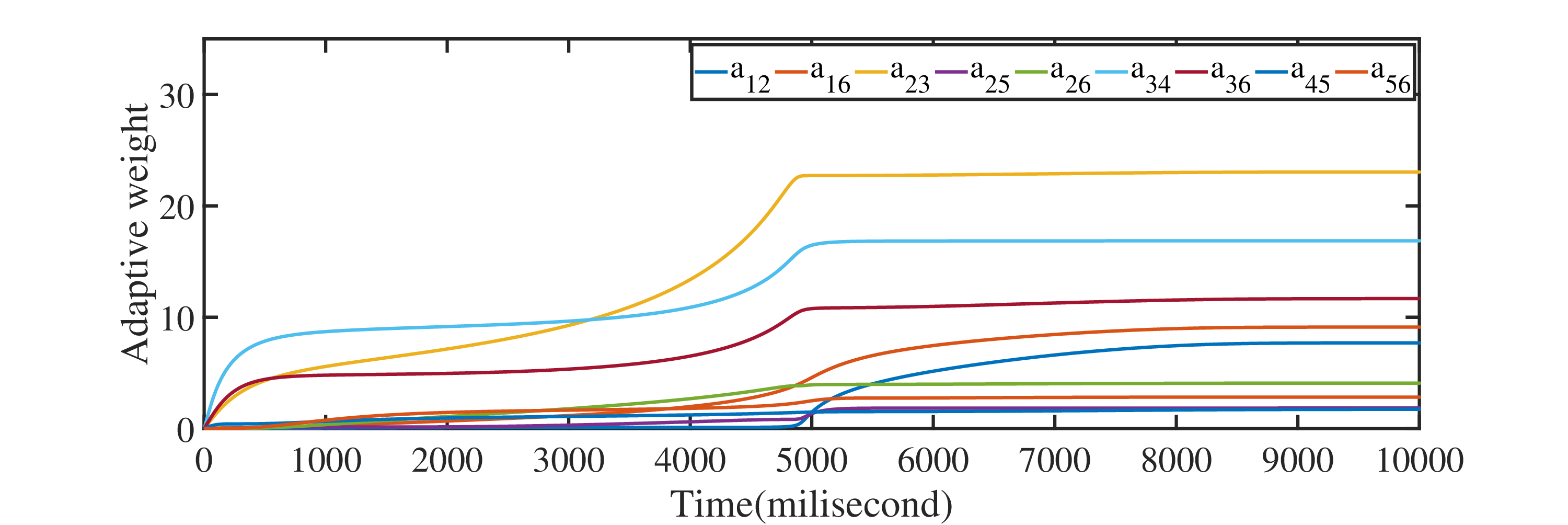}
	\caption{Adaptive weights $a_{ij}, i,j=1,2,...,N$}
\end{figure}

\begin{figure}[!ht]
	\centering
	%\rule{12.8cm}{7.2cm}
	\includegraphics[width=0.99\linewidth,height=40mm, trim=10 0 70 10,clip]{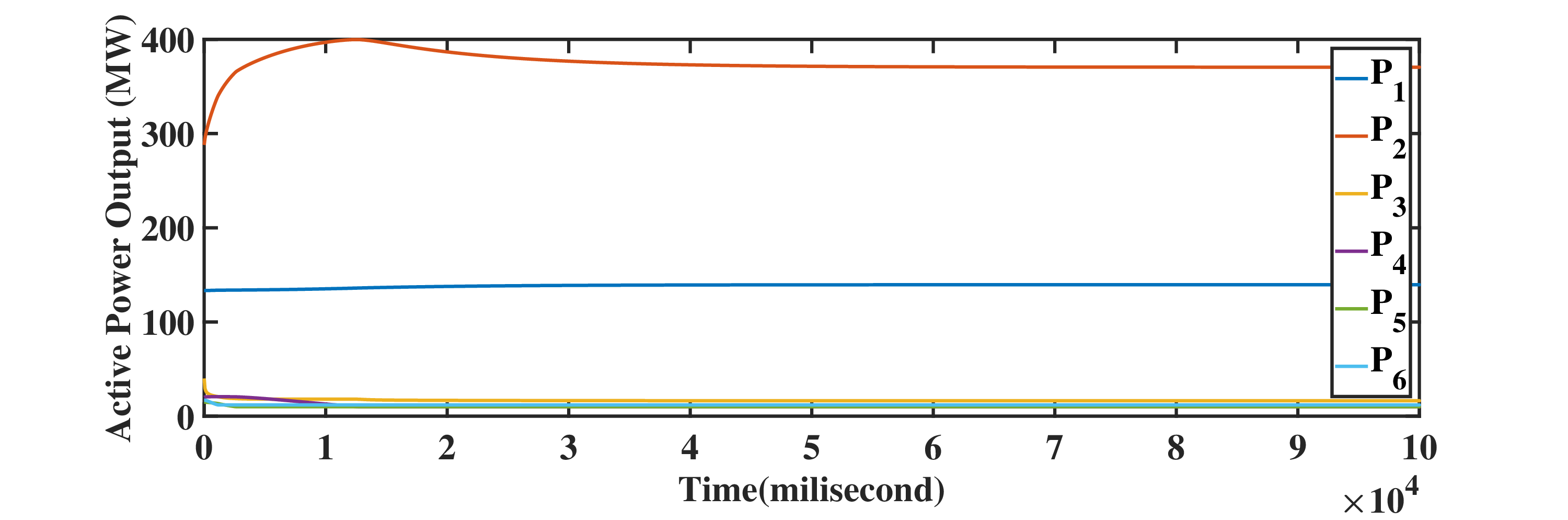}
	\caption{Output powers}
\end{figure}

\subsection{IEEE 30 Bus System: Generator Link Failure}
 Simulation example is provided in this part to demonstrate the validity of the suggested distributed economic power dispatch algorithms considering generator link failure, utilizing switching topologies. Here an optimal economic dispatch of IEEE-30 bus system \cite{19} with generating limits is demonstrated using the proposed approach. The parameters of the cost function are taken from Table 1, and the $\delta$ is assumed to be 10.

 \par The power demand is assumed to be $P_D=513.34 MW$, and the initial generating powers of all units, in this case, are ${P_0} = \left\{ {133.36,287.98.98,40,20,15,17} \right\}$. The topologies switch after every 10 seconds sequentially as can be seen from Fig. 9.  Similar to the previous case, Fig. 10 depicts the incremental cost consensus in this instance. As a result, it demonstrates that even when the network topology alters, the incremental consensus is still reached. Similarly, Fig. 11 illustrates the adaptive weights, and Fig. 12 shows the active power output once more within a given range.

\begin{figure}[!ht]
	\centering
	%\rule{12.8cm}{7.2cm}
	\includegraphics[scale=0.4]{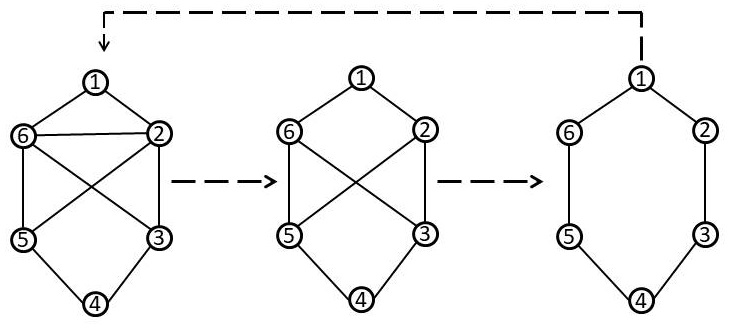}
	\caption{Network Topology}
\end{figure}

\begin{figure}[!ht]
	\centering
	%\rule{12.8cm}{7.2cm}
	\includegraphics[width=0.99\linewidth,height=40mm, trim=10 0 70 10,clip]{fg55.eps}
	\caption{Incremental cost consensus}
\end{figure}

\begin{figure}[!ht]
	\centering
	%\rule{12.8cm}{7.2cm}
	\includegraphics[width=0.99\linewidth,height=40mm, trim=10 0 70 10,clip]{fg77.eps}
	\caption{Adaptive weights $a_{ij}, i,j=1,2,...,N$}
\end{figure}

\begin{figure}[!ht]
	\centering
	%\rule{12.8cm}{7.2cm}
	\includegraphics[width=0.99\linewidth,height=40mm, trim=10 0 70 10,clip]{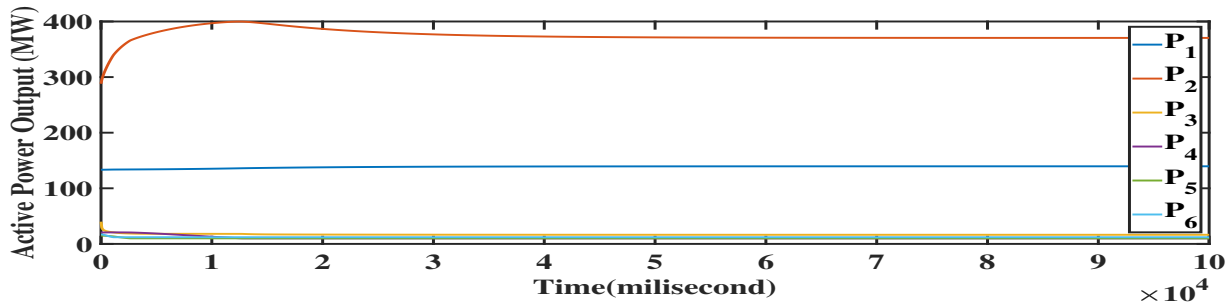}
	\caption{Output powers}
\end{figure}

\subsection{IEEE 30 Bus System with Dummy Nodes: Load Demand Injection in Dummy Nodes }
 To illustrate the concept of Time-Varying Load Demand, a simulation example is presented in this section. To ensure that the drop across the dummy node remains insignificant, the bias value $b_i$ is set to zero, while the value of $c_i$ is kept relatively high, specifically at 0.1.
 \par To cater to an increase in demand a total of 50 MW active power is injected in the dummy nodes at 50 seconds.
 
 The power demand and the initial generating powers of all units are the same as in section \ref{5.2}. Fig. 13 now confirms that incremental cost consensus is reached in the event of a sudden change in load behaviour. Fig. 14 confirms the adaptive coupling weights, and Fig. 15 confirms that the active power outputs are within the designated limits. However, in this instance, the abrupt change in load behaviour causes the incremental cost consensus  and active power output  to change their magnitude at 50 miliseconds.

\begin{figure}[!ht]
	\centering
	%\rule{12.8cm}{7.2cm}
	\includegraphics[width=0.99\linewidth,height=40mm, trim=10 0 70 10,clip]{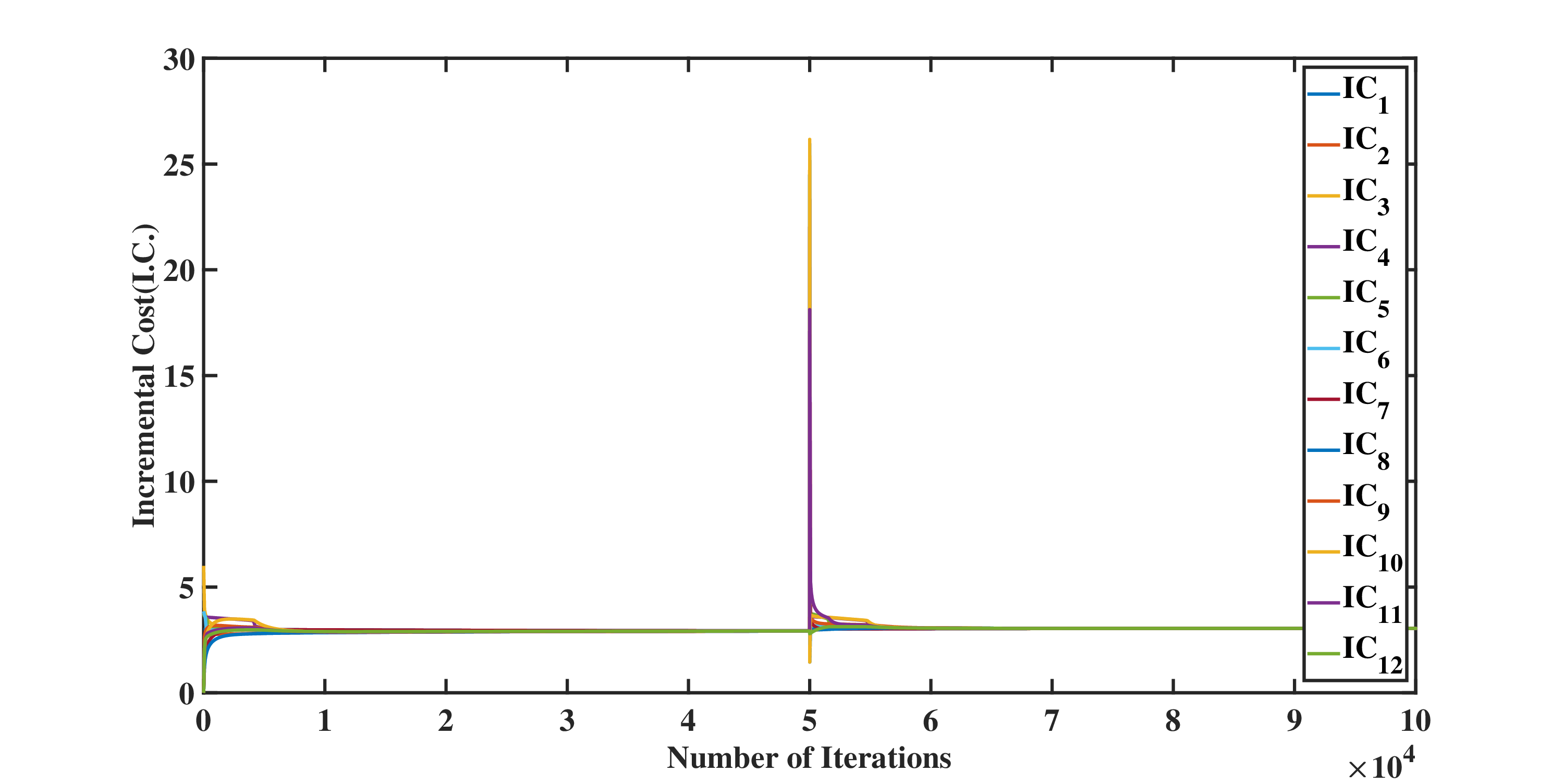}
	\caption{Incremental cost consensus}
\end{figure}

\begin{figure}[!ht]
	\centering
	%\rule{12.8cm}{7.2cm}
	\includegraphics[width=0.99\linewidth,height=40mm, trim=10 0 70 10,clip]{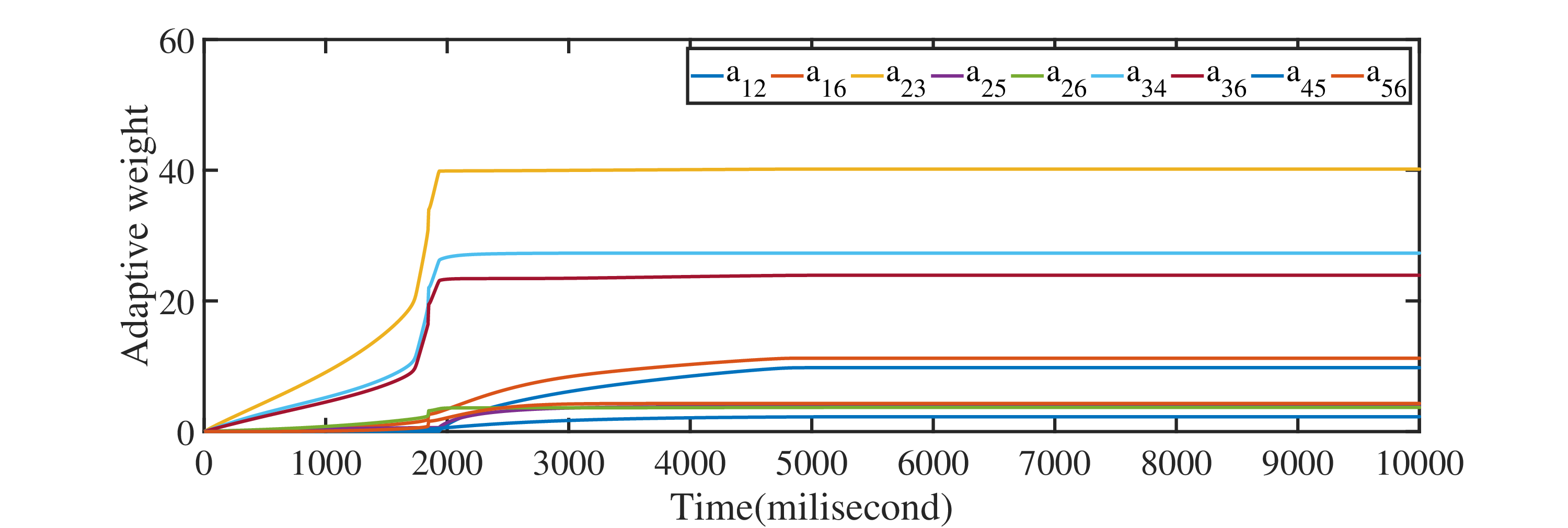}
	\caption{Adaptive weights $a_{ij}, i,j=1,2,...,N$}
\end{figure}

\begin{figure}[!ht]
	\centering
	%\rule{12.8cm}{7.2cm}
	\includegraphics[width=0.99\linewidth,height=40mm, trim=10 0 70 10,clip]{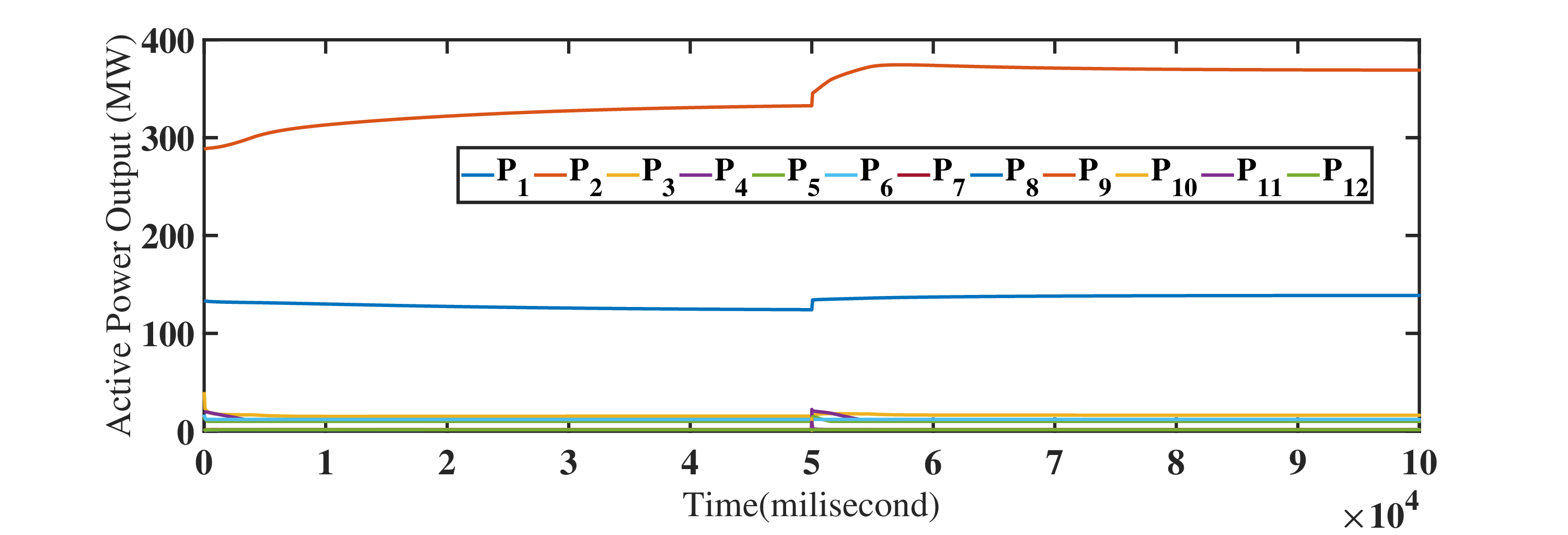}
	\caption{Output powers}
\end{figure}

\section{Conclusion}
This work employs adaptive consensus-based dispatch algorithms to investigate the economic dispatch problem for smart grids with and without generator limits. Distributed consensus based power dispatch algorithms are created to ensure that individual incremental costs are agreed upon while achieving optimal dispatch of active power by distributing the load among the generating units in the proper manner. Additional investigation of time-varying power demand and generator link failure is also done. Our future study will include some realistic models of distributed economic power dispatch techniques in smart grids, such as network constraints i.e. time-delay, power losses in the generator network.

\bibliographystyle{IEEEtran}
\bibliography{ref}
\end{document}